\documentclass[twocolumn,amsthm]{autart}

\usepackage{amsmath,amsfonts,amssymb}

\usepackage{algorithmic}

\usepackage{array}
\usepackage[
    caption=false,
    font=normalsize,
    labelfont=sf,
    textfont=sf
]{subfig}
\usepackage{textcomp}
\usepackage{stfloats}
\usepackage{url}
\usepackage{verbatim}
\usepackage{graphicx}
\usepackage{comment}
\usepackage{booktabs}
\usepackage{multirow}
\usepackage{enumitem}
\usepackage{newtxtext}
\usepackage{xcolor}
\usepackage{tikz}
\usetikzlibrary{
    arrows.meta,
    positioning,
    fit,
    backgrounds,
    calc
}

\newtheoremstyle{mythmstyle}
  {6pt}
  {6pt}
  {\itshape}
  {}
  {\bfseries}
  {.}
  {0.5em}
  {%
    \thmname{#1} \thmnumber{#2}%
    \if\relax\detokenize{#3}\relax
    \else\ {\normalfont\itshape(#3)}%
    \fi
  }

\theoremstyle{mythmstyle}

\newtheorem{theorem}{Theorem}
\newtheorem{thm}[theorem]{Theorem}

\newtheorem{lemma}{Lemma}
\newtheorem{lem}[lemma]{Lemma}

\newtheorem{proposition}{Proposition}
\newtheorem{prop}[proposition]{Proposition}

\newtheorem{corollary}{Corollary}
\newtheorem{cor}[corollary]{Corollary}

\newtheorem{definition}{Definition}
\newtheorem{defn}[definition]{Definition}

\newtheorem{remark}{Remark}
\newtheorem{rmk}[remark]{Remark}

\newtheorem{claim}{Claim}

\newtheorem{assumption}{Assumption}
\newtheorem{as}[assumption]{Assumption}

\definecolor{NodeBlue}{RGB}{44,123,182}
\definecolor{NeighborOrange}{RGB}{230,126,34}
\definecolor{StableGreen}{RGB}{39,174,96}
\definecolor{AlertRed}{RGB}{192,57,43}
\definecolor{SoftGray}{RGB}{245,245,245}
\definecolor{DarkGray}{RGB}{70,70,70}

\definecolor{ClusterBlue}{RGB}{52,152,219}
\definecolor{ClusterOrange}{RGB}{230,126,34}
\definecolor{ClusterPurple}{RGB}{155,89,182}

\begin{document}
\setlength\abovedisplayskip{0.5pt}
        \setlength\belowdisplayskip{0.5pt}
        \setlength\abovedisplayshortskip{0.5pt}
        \setlength\belowdisplayshortskip{0.5pt}
        \allowdisplaybreaks
        \setlength{\parindent}{1em}
        \setlength{\parskip}{0em}  
        \addtolength{\oddsidemargin}{0.5pt}
\begin{frontmatter}
\title{Cluster-Based Distributed Small-Signal Stability Certificates for Grid-Forming Inverter Networks\thanksref{t1}}

\thanks[t1]{Corresponding Author: Sijia Geng.}

\author[Baiae]{Bhathiya Rathnayake}\ead{brathna1@jh.edu}~and    
\author[troy]{Sijia Geng}\ead{sgeng@jhu.edu}            

\address[Baiae]{Ralph O'Connor Sustainable Energy Institute (ROSEI), Johns Hopkins University, Baltimore, MD, USA}  
\address[troy]{Department of Electrical and Computer Engineering, Johns Hopkins University, Baltimore, MD, USA}

\begin{abstract}
Large-scale power networks are often organized by geography, ownership, or control authority, making stability certificates that require a fully assembled global model challenging. This paper develops a time-domain small-signal stability certification framework for grid-forming inverter networks with selectable clustering resolution. The objective is to certify stability at the same scale at which the network is organized and operated: each cluster verifies conditions using intra-cluster and limited boundary information, and these checks collectively yield a network-level stability certificate without requiring a global eigenvalue computation. After linearization about a phase-cohesive synchronized operating point, a small-angle approximation decomposes the model into a voltage subsystem and an angle-frequency subsystem, where the latter is certified by an energy argument using the symmetric weighted-Laplacian network structure. For the voltage subsystem, node-to-node gains are introduced and a cyclic small-gain argument yields a family of sufficient exponential stability certificates ranging from fully decentralized to cluster-based and centralized. For an arbitrary network partition, each cluster verifies intra-cluster directed-cycle conditions and inter-cluster path conditions. The singleton- and single-cluster limits recover the decentralized and centralized certificates, respectively. The resulting stability indices provide diagnostic information beyond a pass/fail verdict by localizing the limiting margin to individual nodes, internal feedback loops, and inter-cluster channels. The analysis further reveals distinct roles of gain normalization and network partitioning. Proportional normalization minimizes the decentralized index at each node over all admissible normalization weights, whereas, for any fixed normalization, the cluster-based certificate can improve upon the decentralized certificate only if the selected partition internalizes all channels responsible for the decentralized failure. The framework is validated on a synthetic network and the IEEE 39-bus system through comparison with eigenvalue-based benchmarks across multiple clustering resolutions and normalizations.
\end{abstract}
\begin{keyword}
Grid-forming inverter networks, distributed stability certification, cyclic small-gain, cluster-based certification, small-signal stability.
\end{keyword}

\end{frontmatter}

\section{Introduction}

\begin{figure*}
    \centering
\includegraphics[width=0.85\linewidth]{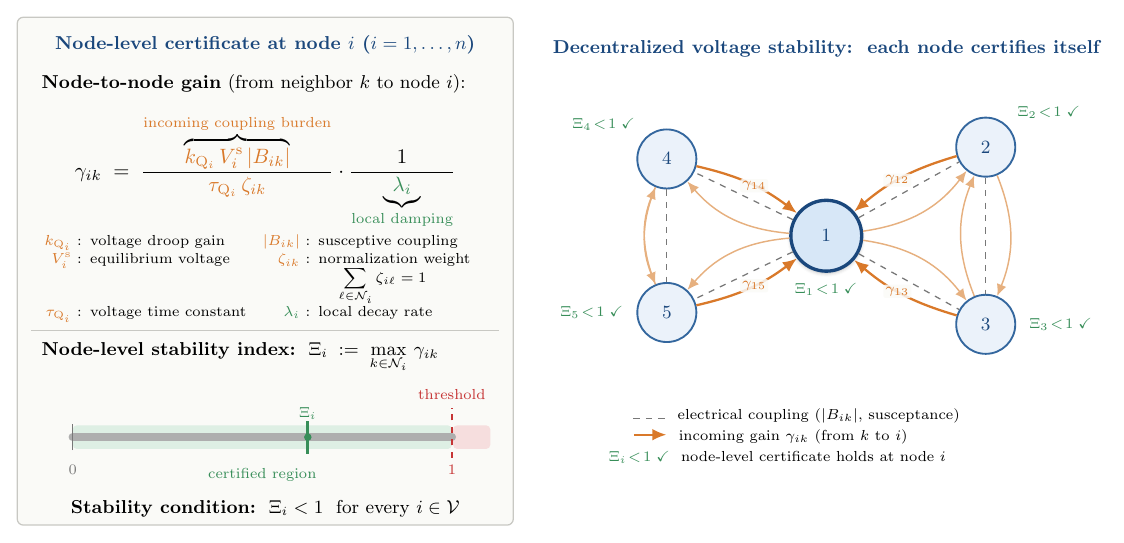}
    \caption{Decentralized voltage stability certification. Left: the node-to-node
gain $\gamma_{ik}$ and the node-level stability index $\Xi_i$. 
Right: a five-node example in which every node verifies its own condition
$\Xi_i<1$; the incoming gains $\gamma_{1k}$, $k\in\mathcal N_1$, are highlighted for node $1$.}
    \label{fig:nodeFig}
\end{figure*}

\begin{figure*}
    \centering
\includegraphics[width=0.85\linewidth]{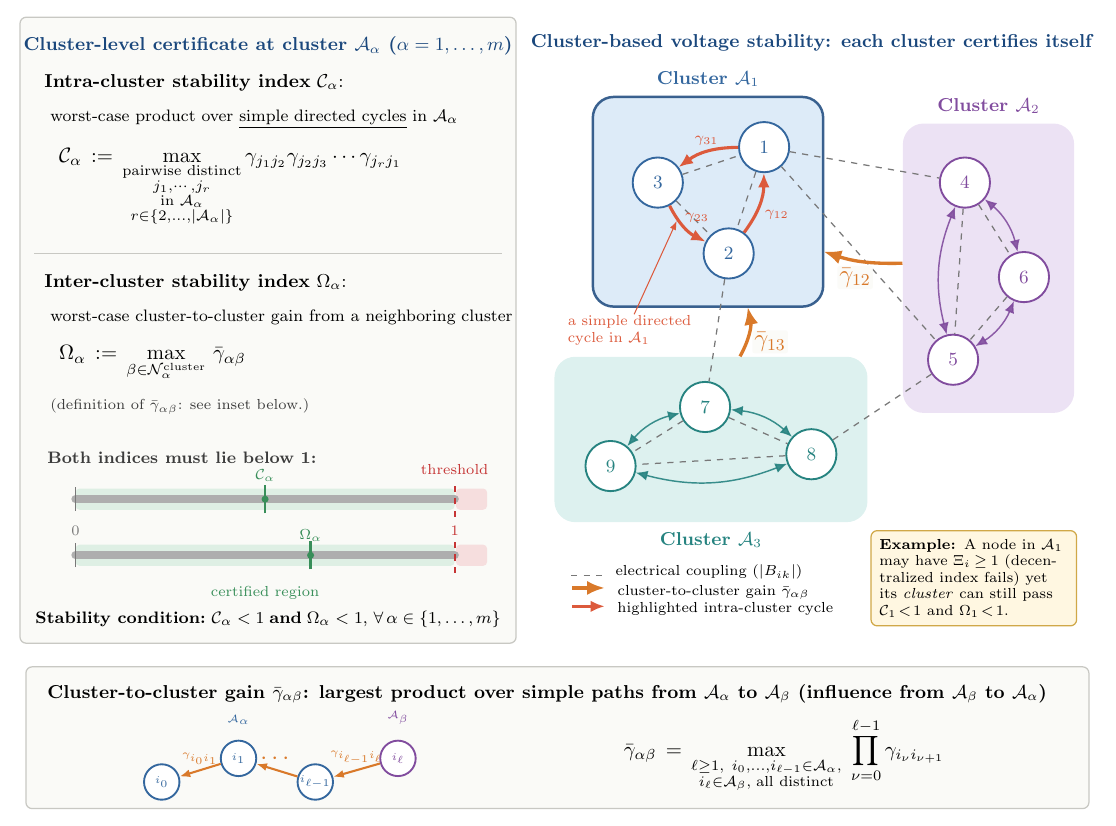}
    \caption{Cluster-based voltage stability certification. Left: each cluster
$\mathcal A_\alpha$ is certified by two indices, the intra-cluster index
$\mathcal C_\alpha$ and the inter-cluster index $\Omega_\alpha$; both must lie
below one. Right: a nine-node network partitioned into three clusters, with
one simple directed cycle in $\mathcal A_1$ highlighted and the incoming
cluster-to-cluster gains $\bar\gamma_{12}$ and $\bar\gamma_{13}$ directed from
$\mathcal A_2$ and $\mathcal A_3$, respectively, to $\mathcal A_1$. Bottom: $\bar\gamma_{\alpha\beta}$ is the largest gain product over simple
paths from $\mathcal A_\alpha$ to $\mathcal A_\beta$; under the
receiver--source convention, the corresponding influence propagates in the
reverse direction, from $\mathcal A_\beta$ to $\mathcal A_\alpha$.}
\label{fig:cluster}
\end{figure*}

The increasing adoption of inverter-based resources (IBRs) is changing the mechanisms that govern power system stability \cite{chatterjee2025voltage,chatterjee2025effects,chatterjee2024sensitivity}. Grid-forming (GFM) inverters are designed to impose a voltage waveform and provide frequency and voltage support, thereby providing functions that were historically associated with synchronous machines \cite{geng2022unified}. In classical machine-dominated systems, small-signal studies have typically been built around centralized models of generator, exciter, governor, and network dynamics \cite{kundur1994power,gibbard2015small}. Although these devices are heterogeneous, their models have long been incorporated into established system-level study workflows \cite{kundur1994power}. In inverter-dominated networks, by contrast, the stability margin can depend sensitively on inverter-level control dynamics \cite{milano2018foundations} while the corresponding device models and control loops are often available only as proprietary black boxes \cite{sun2023stability}. Small-signal stability certification is therefore not merely a matter of computing eigenvalues of a linearized model. It also depends on what information is available for certification, the scale at which the certificate can be verified, and how the resulting margin should be interpreted when no single certification party has access to the full model.

A substantial literature has addressed this limited information challenge through decentralized certificates that rely on local information. By analyzing input admittance across distinct frequency regions, \cite{vorobev2019decentralized} derives per-component stability rules using dissipativity and multiplier theories for inverter-based microgrids. A scale-free frequency control framework using decentralized $\mathcal H_\infty$ synthesis guarantees robust stability independently of the network topology or operating point \cite{pates2019robust}. For grid-connected inverters, small-gain stability certificates bound inverter admittances via grounded network spectral quantities \cite{huang2020h}. Gain-and-phase conditions refine this frequency-domain viewpoint by combining magnitude and phase margins \cite{huang2024gain}. Geometric frequency-domain methods use scaled relative graph analysis to characterize small-signal stability boundaries across varying inverter operating points \cite{baron2026impact}, as well as projected Davis–Wielandt (DW) shells \cite{huang2025geometric} and unified DW-shell separations \cite{feng2025unified}, to reduce the conservativeness of classical decentralized criteria such as the small-gain, small-phase, and passivity theorems. Complementary state-space approaches include port-Hamiltonian and incremental passivity certificates \cite{farokhian2019stability}, distributed Lyapunov stability certificates with dynamic-phasor models \cite{nandanoori2020distributed}, equilibrium-passivity for hybrid AC/DC networks \cite{watson2020control}, and output-differential-passivity certificates for heterogeneous nonlinear bus dynamics \cite{yang2019distributed}. Closest to the current work are the decentralized certificates based on loop transformations \cite{siahaan2024decentralized,wang2026decentralized} that address non-Laplacian network challenges, which were later extended in to the dynamic-line case \cite{haberle2026decentralized}, and small-phase certificates formulated in the complex-frequency domain \cite{niehues2025small}. They provide operating-point-dependent stability conditions and clarify how network coupling terms shape small-signal stability.

This paper addresses a further question: at what resolution should
stability certification be performed? The aforementioned works demonstrate that stability certification can be performed under limited information, but their
conditions are primarily formulated at the device or component level rather
than treating the certification resolution itself as a design choice. Large inverter fleets are not organized solely as electrical networks; they are also divided across geographical regions, plant owners, and control authorities \cite{pudjianto2007virtual}. In such settings, the relevant certification unit may be neither a single inverter nor the entire system, but an intermediate cluster within which device and coupling information can be shared, while information exchange across cluster boundaries remains limited. This motivates a stability certificate whose resolution can be selected to
match the scale at which the network is organized and operated.

We develop such a certificate for droop-controlled GFM inverter networks. Following \cite{schiffer2014conditions,siahaan2024decentralized,wang2026decentralized,pates2019robust}, we assume that the network is lossless and inductive. Around a phase-cohesive synchronized operating point, the linearized model is written in reduced angle coordinates. Under the standard small-angle decoupling assumption \cite{siahaan2024decentralized,wang2026decentralized,simpson2016voltage,jafarian2018interconnection,VanCutsem1998Voltage}, the model separates into an angle-frequency subsystem and a voltage subsystem. The angle-frequency subsystem has a symmetric weighted-Laplacian structure under reciprocal susceptive couplings, and its exponential stability is established by an energy argument. The voltage subsystem consists of locally damped voltage dynamics interconnected through neighbor couplings, making it amenable to a cyclic small-gain analysis based on MAX-preserving maps \cite{karafyllis2011vector}.

The key idea is to express voltage-error propagation through products of node-to-node gains. Each gain measures the influence of a neighboring voltage deviation relative to the local damping of the receiving inverter and can
therefore be interpreted as a coupling-to-damping ratio; see
Fig. \ref{fig:nodeFig}. The gains also depend on normalization weights that allocate the incoming coupling burden across the incident channels of each
receiving node. We consider a power-law family that includes uniform and proportional normalization as special cases.

A cyclic small-gain certificate based on these node-to-node gains admits a natural decomposition under network partitioning. For an arbitrary partition into pairwise disjoint nonempty clusters, internal feedback is characterized
through products of gains around simple directed cycles contained within each cluster, while incoming cross-boundary influence is characterized through
products of gains along admissible inter-cluster paths; see Fig. \ref{fig:cluster}. The corresponding cluster-to-cluster gain is the largest such path product. This yields two
conditions for each cluster: an intra-cluster cycle condition and an inter-cluster path condition. If every cluster satisfies both, the
voltage subsystem is exponentially stable.

The resulting certificate is separable at the selected clustering resolution. Once the node-to-node gains are computed, each cluster verifies its intra-cluster conditions using gains internal to the cluster and its inter-cluster conditions using those intra-cluster gains together with the boundary gains connecting it to neighboring clusters. No global voltage state matrix or information about non-neighboring clusters is required. The same construction applies to every partition: the singleton partition yields a fully decentralized node-level certificate, while
the single-cluster partition yields a centralized cyclic small-gain certificate. Intermediate partitions therefore provide a certification resolution aligned with the physical or organizational structure of the
network.

A further contribution is to distinguish the roles of gain normalization and network partitioning. At the decentralized node level, proportional normalization minimizes the worst incoming gain over all admissible
normalization weights and thus minimizes the node-level stability index. The resulting condition is equivalent to strict row diagonal dominance of the
voltage state matrix. Under the additional no-shunt assumption, the
decentralized condition of \cite{siahaan2024decentralized} is recovered as a sufficient specialization of this proportional certificate.

Partitioning affects conservatism through a different mechanism. For a fixed normalization, the decentralized certificate implies the cluster-based
certificate for every partition, whereas the converse need not hold. However, the cluster-based certificate can improve upon a decentralized failure only if the selected partition internalizes all channels whose gains are responsible for that failure. How readily this can be achieved depends on how the selected normalization distributes the limiting gains among the neighbors of a failing node. 

The framework also produces interpretable stability indices. The node-level index identifies the largest incoming coupling-to-damping ratio at an
individual inverter; see Fig. \ref{fig:nodeFig}. The intra-cluster index identifies the largest gain product around an internal feedback loop, while the
inter-cluster index identifies the strongest incoming influence from a neighboring cluster; see Fig. \ref{fig:cluster}. Relative to their unit threshold,
these indices localize the limiting certificate margin to an individual node, an internal cycle, or an inter-cluster channel, thereby providing diagnostic information beyond a binary spectral verdict, which could inform control retuning.
Numerical studies on a synthetic inverter network and the IEEE 39-bus system benchmark the proposed certificates against eigenvalue-based ground truth.

The remainder of the paper is organized as follows. Section \ref{sec:modeling} introduces the droop-controlled grid-forming inverter and network model. Section \ref{sec:voltage} presents the cluster-based voltage stability certificate and the associated indices. Section \ref{sec:angle} establishes the stability of the angle-frequency subsystem. Section \ref{sec:simulation} presents the numerical study, and Section \ref{sec:conclusions} concludes the paper. The Appendix provides the proof of the cluster-based voltage stability. 

\section{Modeling and Linearization}\label{sec:modeling}

\subsection{Network and Node Model}
\textit{Notation:} The symbols $\mathbb R$, $\mathbb R_{>0}$, $\mathbb R_{+}$, $\mathbb C$, and $\mathbb N$ denote the real numbers, positive real numbers, nonnegative real numbers, complex numbers, and natural numbers, respectively. For a finite set $\mathcal S$, $|\mathcal S|$ denotes its cardinality, while $|x|$ denotes the absolute value of a scalar $x$. The notation $(\cdot)^\top$ and $\|\cdot\|_2$ denote transpose and the vector $2$-norm, respectively. For a symmetric matrix $\boldsymbol A$, $\boldsymbol A\succ0$ and $\boldsymbol A\succeq0$ denote positive definiteness and positive semidefiniteness. The operators $\operatorname{col}(\cdot)$ and $\operatorname{diag}(\cdot)$ denote column stacking and diagonal-matrix construction, respectively; $\boldsymbol 1_n$, $\boldsymbol 0_n$, and $\boldsymbol I_n$ denote the all-ones vector, zero vector, and identity matrix of dimension $n$. We adopt the convention that the maximum over an empty set is zero. Gain subscripts follow a receiver--source convention: $\gamma_{ik}$ denotes the gain from node $k$ to node $i$, and $\bar{\gamma}_{\alpha\beta}$ denotes the gain from cluster $\mathcal A_\beta$ to cluster $\mathcal A_\alpha$. The gain
product
$\gamma_{i_0i_1}\cdots\gamma_{i_{\ell-1}i_\ell}$ is associated with the path represented by the ordered node sequence $(i_0,i_1,\ldots,i_\ell)$, which is said to start at the receiving node
$i_0$ and end at the source node $i_\ell$. The corresponding physical
influence propagates in the reverse order, $i_\ell\to i_{\ell-1}\to\cdots\to i_0$. When $i_\ell=i_0$, the same convention represents a directed cycle.

We consider a network of GFM inverter units\footnote{When algebraic load buses represented by constant impedances  are present, the corresponding load-bus variables can be eliminated by Kron reduction \cite{kundur1994power}, yielding an equivalent network whose nodes correspond only to the GFM inverter buses.}.
Let
\begin{align*}
\mathcal V := \{1,\ldots,n\}, \qquad n\geq 2,
\end{align*}
denote the set of inverter nodes in the inverter network. The electrical interconnection among the nodes is characterized by the complex admittances
\begin{align*}
Y_{ik}:=G_{ik}+\mathrm{i}B_{ik}\in\mathbb C,
\qquad i,k\in\mathcal V,
\end{align*}
where $\mathrm{i}:=\sqrt{-1}$ and $G_{ik},B_{ik}\in\mathbb R$ denote, respectively, the conductance and susceptance. For $i\neq k$, $Y_{ik}$ represents the electrical coupling admittance between inverter nodes $i$ and $k$. For $i\neq k$, if $Y_{ik}=0$, then nodes $i$ and $k$ are not adjacent. Further, we assume that  
$Y_{ik}\neq 0$ if and only if $Y_{ki}\neq 0$ for all $i\neq k$. Thus, the network is the undirected graph
\begin{align*}
\mathcal G := (\mathcal V,\mathcal E),
\end{align*}
with edge set
\begin{align*}
\mathcal E :=
\bigl\{\{i,k\}\subset \mathcal V : i\neq k,\; Y_{ik}\neq 0\bigr\}.
\end{align*}
The neighbor set of node $i$ is therefore
\begin{align*}
\mathcal N_i :=
\{k\in\mathcal V\setminus\{i\}: \{i,k\}\in\mathcal E\}.
\end{align*}

For each node $i$, we associate the voltage phase angle
$\phi_i\in [0, 2\pi)$, the frequency $\omega_i\in\mathbb R$, and the voltage magnitude
$V_i\geq 0$. For each edge $\{i,k\}\in\mathcal E$, define the relative phase
angle
\begin{align*}
\phi_{ik} := \phi_i-\phi_k.
\end{align*}

The net active and reactive power injections at node $i$ are
\begin{align}
P_i
&\!=\!
G_{ii} V_i^2
\!-\!
\sum_{k \in \mathcal N_i}
\!V_i V_k
\bigl(
G_{ik}\!\cos(\phi_{ik})
\!+\!
B_{ik}\!\sin(\phi_{ik})
\bigr),
\label{eq:Pi_general}
\\
Q_i
&\!=\!
\!-\!
B_{ii} V_i^2
\!-\!\!\!
\sum_{k \in \mathcal N_i}
\!V_i V_k
\bigl(\!
G_{ik}\!\sin(\phi_{ik})
-
B_{ik}\!\cos(\phi_{ik})
\!\bigr),
\label{eq:Qi_general}
\end{align}
where the diagonal terms are defined as
\begin{align*}
G_{ii}
&:=
\hat G_{ii} + \sum_{k \in \mathcal N_i} G_{ik},
&
B_{ii}
&:=
\hat B_{ii} + \sum_{k \in \mathcal N_i} B_{ik}.
\end{align*}
Here, $\hat G_{ii}$ and $\hat B_{ii}$ represent shunt conductance and susceptance at
node $i$, respectively.

\begin{as}[Lossless inductive network \cite{schiffer2014conditions}]
\label{ass_lossless}
For all $i,k\in\mathcal V$ with $i\neq k$,
\begin{align*}
G_{ik}=0,
\qquad
B_{ik}\leq 0.
\end{align*}
Moreover, the shunt conductances and susceptances satisfy
\begin{align*}
\hat G_{ii}=0,
\qquad
\hat B_{ii}\leq 0,
\qquad i\in\mathcal V .
\end{align*}
\end{as}

\begin{as}[Connected susceptance graph]
\label{as:connected_reduced_graph}
Under Assumption \ref{ass_lossless}, the susceptance graph
$\mathcal G=(\mathcal V,\mathcal E)$
is connected. Equivalently, for any two distinct nodes $i,k\in\mathcal V$, there
exist an integer $r\geq 1$ and a sequence of nodes $
i=i_0,i_1,\ldots,i_r=k$ such that
$
B_{i_{\ell-1}i_\ell}\neq 0,\,
\ell=1,\ldots,r.$
\end{as}

Since $B_{ii}=\hat B_{ii}+\sum_{k\in\mathcal N_i}B_{ik}$,
Assumption \ref{ass_lossless} implies $B_{ii}\leq 0$ for all
$i\in\mathcal V$. Under Assumption \ref{ass_lossless}, the power
expressions \eqref{eq:Pi_general},\eqref{eq:Qi_general} reduce to
\begin{align}
P_i
&=
\sum_{k \in \mathcal N_i}
|B_{ik}| V_i V_k \sin(\phi_{ik}),
\label{eq:Pi_lossless}
\\
Q_i
&=
|B_{ii}| V_i^2
-
\sum_{k \in \mathcal N_i}
|B_{ik}| V_i V_k \cos(\phi_{ik}).
\label{eq:Qi_lossless}
\end{align} \hfill

Each node is modeled as a droop-controlled GFM inverter:
\begin{align*}
\dot{\phi}_i
&=
\omega_i,
\\
\tau_{\mathrm{P}_i}\dot{\omega}_i
&=
-
\omega_i
+
\omega^{\rm d}
-
k_{\mathrm{P}_i}\bigl(P_i - P_i^{\rm d}\bigr),
\\
\tau_{\mathrm{Q}_i}\dot{V}_i
&=
-
V_i
+
V_i^{\rm d}
-
k_{\mathrm{Q}_i}\bigl(Q_i - Q_i^{\rm d}\bigr),
\end{align*}
where $\tau_{\mathrm{P}_i}>0$ and $\tau_{\mathrm{Q}_i}>0$ are the active-power and
reactive-power filter time constants, respectively, while
$k_{\mathrm{P}_i}>0$ and $k_{\mathrm{Q}_i}>0$ are the active-power/frequency and
reactive-power/voltage droop gains. The scalar $\omega^{\rm d}>0$ denotes the
nominal frequency setpoint, $V_i^{\rm d}>0$ denotes the voltage magnitude
setpoint, and $P_i^{\rm d},Q_i^{\rm d}\in\mathbb R$ denote the desired active
and reactive power injections at node $i$.

In compact form,
\begin{align}
\dot{\boldsymbol \phi}
&=
\boldsymbol \omega,
\label{eq:compact_delta}
\\
\boldsymbol T_{\rm P} \dot{\boldsymbol \omega}
&=
-
\boldsymbol \omega
+
\boldsymbol 1_n \omega^{\rm d}
-
\boldsymbol K_{\rm P} (\boldsymbol P - \boldsymbol P^{\rm d}),
\label{eq:compact_omega}
\\
\boldsymbol T_{\rm Q} \dot{\boldsymbol V}
&=
-
\boldsymbol V
+
\boldsymbol V^{\rm d}
-
\boldsymbol K_{\rm Q} (\boldsymbol Q - \boldsymbol Q^{\rm d}),
\label{eq:compact_voltage}
\end{align}
where $\boldsymbol \phi:=\operatorname{col}(\phi_i)\in \mathbb R^n$, $\boldsymbol \omega:=\operatorname{col}(\omega_i)\in \mathbb R^n$, $\boldsymbol V:=\operatorname{col}(V_i)\in \mathbb R^n$, $\boldsymbol P
:=
\operatorname{col}(P_i) \in \mathbb R^n$, $\boldsymbol P^{\rm d}
:=
\operatorname{col}(P_i^{\rm d}) \in \mathbb R^n$, $\boldsymbol Q
:=
\operatorname{col}(Q_i) \in \mathbb R^n$, $\boldsymbol Q^{\rm d}
:=
\operatorname{col}(Q_i^{\rm d}) \in \mathbb R^n$, $\boldsymbol V^{\rm d}
:=
\operatorname{col}(V_i^{\rm d}) \in \mathbb R^n$, $\boldsymbol T_{\rm P}
:=
\operatorname{diag}(\tau_{\mathrm{P}_i}) \in \mathbb R^{n \times n}$, $\boldsymbol T_{\rm Q}
:=
\operatorname{diag}(\tau_{\mathrm{Q}_i}) \in \mathbb R^{n \times n}$, $\boldsymbol K_{\rm P}
:=
\operatorname{diag}(k_{\mathrm{P}_i}) \in \mathbb R^{n \times n},$ $\boldsymbol K_{\rm Q}
:=
\operatorname{diag}(k_{\mathrm{Q}_i}) \in \mathbb R^{n \times n}.$

\begin{as}[Existence of a phase-cohesive synchronized operating point
\cite{schiffer2014conditions}]
\label{ass_sync_operating_point}
Let the admissible phase-cohesive set be defined by
\begin{align}
\Phi
\!:=\!
\left\{
\boldsymbol\phi\in [0,2\pi)^n:
|\phi_i-\phi_k|<\frac{\pi}{2},
\,
\forall \{i,k\}\in\mathcal E
\right\}.
\label{eq:phase_condition}
\end{align}
There exist a phase vector $\boldsymbol\phi^{\rm s}\in\Phi$, a synchronous
frequency $\omega^{\rm s}\in\mathbb R$, and a positive voltage vector
$\boldsymbol V^{\rm s}\in\mathbb R_{>0}^n$ such that
\begin{align*}
\boldsymbol 1_n \omega^{\rm s}
-
\boldsymbol 1_n \omega^{\rm d}
+
\boldsymbol K_{\rm P}
\bigl(
\boldsymbol P(\boldsymbol\phi^{\rm s},\boldsymbol V^{\rm s})-\boldsymbol P^{\rm d}
\bigr)
&=
\boldsymbol 0_n,
\\
\boldsymbol V^{\rm s}
-
\boldsymbol V^{\rm d}
+
\boldsymbol K_{\rm Q}
\bigl(
\boldsymbol Q(\boldsymbol\phi^{\rm s},\boldsymbol V^{\rm s})-\boldsymbol Q^{\rm d}
\bigr)
&=
\boldsymbol 0_n.
\end{align*}
\end{as}

Thus, the corresponding synchronized motion of \eqref{eq:compact_delta}-\eqref{eq:compact_voltage} is 
\begin{align}
\boldsymbol \phi^\star(t)
&=
\operatorname{mod}_{2\pi}
\bigl(
\boldsymbol \phi^{\rm s} + \boldsymbol 1_n \omega^{\rm s} t
\bigr),
\label{eq:sync_motion_delta}
\\
\boldsymbol \omega^\star(t)
&=
\boldsymbol 1_n \omega^{\rm s},
\label{eq:sync_motion_omega}
\\
\boldsymbol V^\star(t)
&=
\boldsymbol V^{\rm s}.
\label{eq:sync_motion_voltage}
\end{align}
Since only angle differences affect the power-flow expressions \eqref{eq:Pi_lossless},\eqref{eq:Qi_lossless}, the synchronized
motion can be studied in a rotating reference frame \cite{schiffer2014conditions}. Define the shifted variables
\begin{align*}
\boldsymbol{\tilde \omega}(t)
&:=
\boldsymbol \omega(t) - \boldsymbol 1_{n} \omega^{\rm s},
\qquad
\boldsymbol{\tilde \phi}(t)
&:=
\boldsymbol \phi(0) + \int_0^t \boldsymbol{\tilde \omega}(\tau)\,d\tau.
\end{align*}
To eliminate the rotational degree of freedom, select node $n$ as the reference and define
\begin{align*}
\boldsymbol \theta
:=
\boldsymbol{\mathcal R}\boldsymbol{\tilde \phi},\qquad\qquad
\boldsymbol{\mathcal R}
:=
\begin{bmatrix}
\boldsymbol I_{n-1} & -\boldsymbol 1_{n-1}
\end{bmatrix}.
\end{align*}
Thus $\boldsymbol \theta = (\theta_1,\ldots,\theta_{n-1})^\top \in \mathbb R^{n-1}$, and we set
\begin{align*}
\theta_n := 0.
\end{align*}
It follows that
\begin{align*}
\theta_i = \phi_i - \phi_n,
\qquad
\forall i\in\mathcal V\setminus\{n\}.
\end{align*}

Under Assumption \ref{ass_sync_operating_point}, the synchronized motion
\eqref{eq:sync_motion_delta}-\eqref{eq:sync_motion_voltage} corresponds, in the
rotating frame, to the equilibrium
\begin{align}
(\boldsymbol\theta,\tilde{\boldsymbol\omega},\boldsymbol V)
=
(\boldsymbol\theta^{\rm s},\boldsymbol 0_n,\boldsymbol V^{\rm s}),
\label{eq:reduced_equilibrium}
\end{align}
where $\boldsymbol\theta^{\rm s}
:=
\boldsymbol{\mathcal R}\boldsymbol\phi^{\rm s}.
$
Moreover, with the convention $\theta_n^{\rm s}:=0$,
\begin{align*}
\theta_{ik}^{\rm s}:=\theta_i^{\rm s}-\theta_k^{\rm s}
=
\phi_i^{\rm s}-\phi_k^{\rm s}
=
\phi_{ik}^{\rm s},
\end{align*}
for all $\{i,k\}\in\mathcal E$. 

In the reduced coordinates $(\boldsymbol\theta,\tilde{\boldsymbol\omega},\boldsymbol V)$,
the dynamics take the following form. Let 
\begin{align*}
   \mathcal{V}_0 := \mathcal  V\setminus\{n\}. 
\end{align*}
For each $i \in \mathcal V_0$,
\begin{align}
\dot{\theta}_i
&=
\tilde \omega_i - \tilde \omega_n,
\label{eq:reduced_theta_i}
\\
\begin{split}
\tau_{\mathrm{P}_i}\dot{\tilde \omega}_i
&\!=\!
-
\!\tilde \omega_i
\!-\!
k_{\mathrm{P}_i}
\!\!\sum_{k \in \mathcal N_i}
|B_{ik}| V_i V_k \sin(\theta_{ik})
+
\omega^{\rm d} - \omega^{\rm s} \\&\quad+ k_{\mathrm{P}_i}P_i^{\mathrm{d}},
\label{eq:reduced_omega_i}
\end{split}
\\
\begin{split}
\tau_{\mathrm{Q}_i}\dot V_i
&\!=\!
-
V_i
\!-\!
k_{\mathrm{Q}_i}
\Big(\!
|B_{ii}|V_i^2
\!-\!\!\!
\sum_{k \in \mathcal N_i}
\!\!|B_{ik}|V_iV_k\!\cos(\theta_{ik})
\!\Big)
\\&\quad+
V_i^{\rm d}+ k_{\mathrm{Q}_i}Q_i^{\rm d},
\label{eq:reduced_V_i}
\end{split}
\end{align}
where
$\theta_{ik} := \theta_i - \theta_k.$

For the reference node $n$,  since $\theta_n=0$, we have
\begin{align}
\begin{split}
\tau_{\mathrm{P}_n}\dot{\tilde \omega}_n
&\!=\!
-
\tilde \omega_n
\!+\!
k_{\mathrm{P}_n}
\!\!\!\sum_{k \in \mathcal N_n}\!\!
|B_{nk}|V_nV_k\!\sin(\theta_k)
\!+\!
\omega^{\rm d} \!-\! \omega^{\rm s} \\&\quad+ k_{\mathrm{P}_n}P_n^{\mathrm d},
\label{eq:reduced_omega_n}
\end{split}
\\
\begin{split}
\tau_{\mathrm{Q}_n}\!\!\dot V_n
&\!\!=\!\!
-
\!V_n
\!-\!
k_{\mathrm{Q}_n}
\!\!\Big(\!
|B_{nn}|V_n^2
\!-\!\!\!\!\!
\sum_{k \in \mathcal N_n}
\!\!\!|B_{nk}|V_nV_k\!\cos(\theta_k)
\!\!\Big)
\\&\quad +
V_n^{\rm d} + k_{\mathrm{Q}_n}Q_n^{\rm d}.
\label{eq:reduced_V_n}
\end{split}
\end{align}

\subsection{Linearization around the synchronized operating point}

Recall that $(\boldsymbol \theta^{\rm s},\boldsymbol 0_n,\boldsymbol V^{\rm s})$ is the equilibrium corresponding to the synchronized solution \eqref{eq:sync_motion_delta}-\eqref{eq:sync_motion_voltage} in reduced coordinates. Introduce the deviations
\begin{align*}
\delta \theta_i
&:=
\theta_i - \theta_i^{\rm s},
\qquad
\forall i \in \mathcal V_0,
\\
\delta \omega_i
&:=
\tilde \omega_i,
\qquad
\forall i \in \mathcal V,
\\
\delta V_i
&:=
V_i - V_i^{\rm s},
\qquad
\forall i \in \mathcal V.
\end{align*}
Since $\theta_n \equiv 0$, for any edge involving the reference node one has $\theta_{kn}^{\rm s} = \theta_k^{\rm s}$. Moreover,
\begin{align*}
\delta \theta_{ik}
:=
\theta_{ik} - \theta_{ik}^{\rm s}
=
\delta \theta_i - \delta \theta_k.
\end{align*}
The linearized dynamics for $i \in \mathcal V_0$ are
\begin{align}
\delta \dot \theta_i
&=
\delta \omega_i -\delta \omega_n,
\label{eq:lin_theta_i}
\\
\begin{split}
\delta \dot \omega_i
&=
-
\frac{1}{\tau_{\mathrm{P}_i}}\delta \omega_i
\!-\!
\frac{k_{\mathrm{P}_i}}{\tau_{\mathrm{P}_i}}
\!\!\sum_{k \in \mathcal N_i}
|B_{ik}|
\Bigl(
V_i^{\rm s} V_k^{\rm s} \cos(\theta_{ik}^{\rm s})(\delta \theta_i - \delta \theta_k)
\\&\qquad\qquad+
V_k^{\rm s} \sin(\theta_{ik}^{\rm s})\delta V_i
+
V_i^{\rm s} \sin(\theta_{ik}^{\rm s})\delta V_k
\Bigr),\label{eq:lin_omega_i}
\end{split}
\\
\begin{split}
\delta \dot V_i
&=
-
\frac{1}{\tau_{\mathrm{Q}_i}}\delta V_i
-
\frac{k_{\mathrm{Q}_i}}{\tau_{\mathrm{Q}_i}}
\Biggl(
2|B_{ii}|V_i^{\rm s} \delta V_i
\\&\quad-
\sum_{k \in \mathcal N_i}
|B_{ik}|
\Bigl(
-
V_i^{\rm s} V_k^{\rm s} \sin(\theta_{ik}^{\rm s})(\delta \theta_i - \delta \theta_k)
\\&\qquad+
V_k^{\rm s} \cos(\theta_{ik}^{\rm s})\delta V_i
+
V_i^{\rm s} \cos(\theta_{ik}^{\rm s})\delta V_k
\Bigr)
\Biggr).
\label{eq:lin_V_i}
\end{split}
\end{align}
When $k=n$, the convention $\delta \theta_n = 0$ is used. 

For the reference node,
\begin{align}
\begin{split}
\delta \dot \omega_n
&\!=\!
-
\frac{1}{\tau_{\mathrm{P}_n}}\delta \omega_n
\!+\!
\frac{k_{\mathrm{P}_n}}{\tau_{\mathrm{P}_n}}
\!\!\!\sum_{k \in \mathcal N_n}
\!\!|B_{nk}|
\Bigl(
V_n^{\rm s} V_k^{\rm s} \cos(\theta_k^{\rm s})\delta \theta_k
\\&\quad+
V_k^{\rm s} \sin(\theta_k^{\rm s})\delta V_n
+
V_n^{\rm s} \sin(\theta_k^{\rm s})\delta V_k
\Bigr),
\label{eq:lin_omega_n}
\end{split}
\\
\begin{split}
\delta \dot V_n
&=
-
\frac{1}{\tau_{\mathrm{Q}_n}}\delta V_n
-
\frac{k_{\mathrm{Q}_n}}{\tau_{\mathrm{Q}_n}}
\Biggl(
2|B_{nn}|V_n^{\rm s} \delta V_n
\\&\quad-
\sum_{k \in \mathcal N_n}
|B_{nk}|
\Bigl(
-
V_n^{\rm s} V_k^{\rm s} \sin(\theta_k^{\rm s})\delta \theta_k
\\&\qquad\quad+
V_k^{\rm s} \cos(\theta_k^{\rm s})\delta V_n
+
V_n^{\rm s} \cos(\theta_k^{\rm s})\delta V_k
\Bigr)
\Biggr).
\label{eq:lin_V_n}
\end{split}
\end{align}

\subsection{Decoupling}

\begin{as}[Small-angle decoupling approximation]
\label{ass:small_angle_decoupling}
Following \cite{siahaan2024decentralized,simpson2016voltage}, the small-angle approximation used in the subsequent analysis is implemented through the substitutions
\begin{align}
\sin(\theta_{ik}^{\rm s}) &\mapsto 0,
&
\cos(\theta_{ik}^{\rm s}) &\mapsto 1,
\qquad
\{i,k\}\in\mathcal E.\label{eq:small_angle_trig_approx}
\end{align}
\end{as}
Applying \eqref{eq:small_angle_trig_approx} to the linearized model
\eqref{eq:lin_theta_i}-\eqref{eq:lin_V_n} eliminates the angle-voltage
cross-coupling terms. The resulting approximate linearized dynamics decouple into
an angle-frequency subsystem and a voltage subsystem.

\noindent \underline{Angle-frequency subsystem}
\begin{align}
\delta \dot \theta_i
&=
\delta \omega_i-\delta \omega_n,
\qquad i\in\mathcal V_0,
\label{eq:af_theta_dyn}
\\
\begin{split}
\delta \dot \omega_i
&=
-\frac{1}{\tau_{\mathrm{P}_i}}\delta \omega_i
-
\frac{k_{\mathrm{P}_i}}{\tau_{\mathrm{P}_i}}
\sum_{k\in\mathcal N_i}
|B_{ik}|V_i^{\rm s}V_k^{\rm s}\delta \theta_i
\\
&\quad+
\frac{k_{\mathrm{P}_i}}{\tau_{\mathrm{P}_i}}
\sum_{k\in\mathcal N_i\setminus\{n\}}
|B_{ik}|V_i^{\rm s}V_k^{\rm s}\delta \theta_k,
\qquad i\in\mathcal V_0,
\end{split}
\label{eq:af_omega_dyn_i}
\\
\delta \dot \omega_n
&=
-\frac{1}{\tau_{\mathrm{P}_n}}\delta \omega_n
+
\frac{k_{\mathrm{P}_n}}{\tau_{\mathrm{P}_n}}
\sum_{k\in\mathcal N_n}
|B_{nk}|V_n^{\rm s}V_k^{\rm s}\delta \theta_k.
\label{eq:af_omega_dyn_n}
\end{align}

\noindent \underline{Voltage subsystem}
\begin{align}\label{eq:linearized_volt_dynamics}
\begin{split}
\delta \dot V_i
&=
-
\lambda_i\delta V_i
+
\frac{k_{\mathrm{Q}_i}V_i^{\rm s}}{\tau_{\mathrm{Q}_i}}
\sum_{k \in \mathcal N_i}
|B_{ik}|  \delta V_k,
\end{split}
\end{align}
where for each $i \in \mathcal V$
\begin{align}\label{eq:lambda_i}
   \lambda_i := \frac{1 + k_{\mathrm{Q}_i}D_i}{\tau_{\mathrm{Q}_i}},
\end{align}
with
\begin{align}
D_i
&:=
2|B_{ii}|V_i^{\rm s}
-
\sum_{k \in \mathcal N_i}
|B_{ik}| V_k^{\rm s}.
\label{eq:Di_def}
\end{align}

\section{Cluster-Based Voltage Stability Criterion}\label{sec:voltage}

In this section, we study the voltage subsystem
\eqref{eq:linearized_volt_dynamics}-\eqref{eq:Di_def}, which can be compactly written as
\begin{align*}
    \delta \dot{\boldsymbol V}
    =
    \boldsymbol A_{\rm v}\delta \boldsymbol V,
\end{align*}
where $\delta\boldsymbol V
   :=
   \operatorname{col}(\delta V_i)_{i\in\mathcal V}
   \in \mathbb R^n$
and $\boldsymbol A_{\rm v}\in\mathbb R^{n\times n}$ is the state matrix induced by
\eqref{eq:linearized_volt_dynamics}. Thus, the stability of the voltage subsystem
is characterized by the Hurwitzness of $\boldsymbol A_{\rm v}$. This
provides an exact centralized test, but it does not reveal how stability is distributed
across the network or identify which clusters or coupling channels are responsible for the loss or preservation of stability. Moreover, $\boldsymbol A_{\rm v}$ is Metzler, so when $\lambda_i>0$ for all $i\in\mathcal{V}$ in  \eqref{eq:lambda_i}, the voltage subsystem admits an M-matrix spectral stability characterization \cite{berman1994nonnegative}. Such a test is also centralized and does not directly provide separable per-cluster certificates or identify the limiting internal loops and inter-cluster channels. 

The objective of this section is to address this limitation. Suppose that the
inverter network is partitioned into clusters. Can one certify voltage stability of the
entire network through conditions that are checked cluster by cluster, using only intra-cluster
information and limited boundary information from neighboring clusters? Moreover, can such
conditions be expressed in a form that is interpretable enough to guide corrective action
by network operators?

To this end, we develop a flexible cluster-based distributed stability certificate for the voltage subsystem
\eqref{eq:linearized_volt_dynamics}-\eqref{eq:Di_def}. The proposed certificate separates two mechanisms
through which voltage perturbations may propagate: node-to-node gain cycles circulating perturbations within each
cluster, and cluster-to-cluster gain paths through which perturbations enter one cluster
from another. The resulting conditions are local at the cluster level and quantify whether
each cluster has sufficient internal damping to attenuate both its own internal feedback
loops and the strongest incoming influence from neighboring clusters. As a special case,
the framework also yields a fully decentralized node-level stability condition.

We first define the node-level and cluster-level gains used in the distributed cluster-based stability criterion.

\begin{defn}[Node-to-node and cluster-to-cluster gains]
\label{def:hierarchical_voltage_gains} \rm 
Let $\mathcal V=\{1,\ldots,n\}$ denote the set of grid-forming inverter nodes. Consider the voltage subsystem
\eqref{eq:linearized_volt_dynamics}-\eqref{eq:Di_def}, and suppose that
\begin{align*}
\lambda_i>0,
\qquad i\in\mathcal V,
\end{align*}
where $\lambda_i$ is given by \eqref{eq:lambda_i},\eqref{eq:Di_def}. For each node $i\in\mathcal V$, let $\{\zeta_{ik}\}_{k\in\mathcal N_i}$ be positive normalization weights satisfying
\begin{align}
\zeta_{ik}>0,\qquad k\in\mathcal N_i,\qquad
\sum_{k\in\mathcal N_i}\zeta_{ik}=1.
\label{eq:zeta_allocation}
\end{align} For each ordered pair $(i,k)\in\mathcal V\times\mathcal V$, define the \textit{node-to-node gain} from node $k$ to node $i$ by
\begin{align}
\gamma_{ik}
:=
\begin{cases}
\dfrac{k_{\mathrm{Q}_i} V_i^{\rm s}|B_{ik}|}
{\tau_{\mathrm{Q}_i}\lambda_i\zeta_{ik}}, & k\in\mathcal N_i,\\[3mm]
0, & k\notin\mathcal N_i,
\end{cases}
\label{eq:gammaik_def}
\end{align}
where $k_{\mathrm{Q}_i},\tau_{\mathrm{Q}_i}>0$ for all $i\in\mathcal V$. Thus $\gamma_{ik}\ge 0$ for all $i,k\in\mathcal V$. See Fig. \ref{fig:nodeFig} (top left).  

Let
\begin{align*}
\mathcal V=\mathcal A_1\cup\cdots\cup\mathcal A_m,
\end{align*}
be a partition of $\mathcal V$ into pairwise disjoint nonempty clusters. For each ordered pair of distinct clusters $(\mathcal A_\alpha,\mathcal A_\beta)$, $\alpha,\beta\in\{1,\cdots, m\}$,
define the \textit{cluster-to-cluster gain} from $\mathcal A_\beta$ to $\mathcal A_\alpha$ by
\begin{align}
\begin{split}
\bar\gamma_{\alpha\beta}
\!:=\!
\max
\bigg\{\!
\prod_{\nu=0}^{\ell-1}&\gamma_{i_\nu i_{\nu+1}}
:
\ell\ge 1,\\&\quad 
i_0,\ldots,i_\ell \text{ are pairwise distinct},\\&\quad
i_0,\ldots,i_{\ell-1}\in\mathcal A_\alpha,
i_\ell\in\mathcal A_\beta
\bigg\}.
\end{split}
\label{eq:cluster_gain_def}
\end{align}
Thus, $\bar\gamma_{\alpha\beta}$ is the largest gain product over simple paths from $\mathcal A_\alpha$ to $\mathcal A_\beta$. Under the receiver--source convention, the corresponding influence propagates in the reverse direction, from $\mathcal A_\beta$ to $\mathcal A_\alpha$. See Fig. \ref{fig:cluster} (bottom). 

Finally, define the effective inter-cluster neighbor set
\begin{align}
\mathcal N_\alpha^{\rm cluster}
:=
\big\{
\beta\in\{1,\ldots,m\}\setminus\{\alpha\}:
\bar\gamma_{\alpha\beta}>0
\big\}.
\label{eq:cluster_neighbor_set_def}
\end{align}
\end{defn}


\begin{rmk}[Choices of normalization weights]\label{rmk:zeta_choices}\rm
The normalization weights $\zeta_{ik}$ in Definition
\ref{def:hierarchical_voltage_gains} are not network or controller
parameters. They determine the scaling assigned to each incoming channel in the certificate. A useful one-parameter family is
the power-law normalization
\begin{align}
\zeta_{ik}^{(x)}
=
\frac{|B_{ik}|^x}
{\sum_{\ell\in\mathcal N_i}|B_{i\ell}|^x},
\qquad
k\in\mathcal N_i,\quad x\geq 0 .
\label{eq:zeta_powerlaw}
\end{align}
For this choice, given $\lambda_i>0$, the node-to-node gains are
\begin{align}
\gamma_{ik}^{(x)}
=
\frac{k_{\mathrm Q_i}V_i^{\rm s}}
{\tau_{\mathrm Q_i}\lambda_i}
\left(
\sum_{\ell\in\mathcal N_i}|B_{i\ell}|^x
\right)
|B_{ik}|^{1-x},
\quad k\in\mathcal N_i .
\label{eq:gamma_powerlaw}
\end{align}
The variable $x$ controls how strongly the normalization weights in \eqref{eq:zeta_powerlaw} favor stronger
susceptive couplings.

The uniform normalization is recovered at $x=0$. Since $|B_{ik}|>0$ for
$k\in\mathcal N_i$, \eqref{eq:zeta_powerlaw} gives
\begin{align}
\zeta_{ik}^{(0)}
=
\frac{1}{|\mathcal N_i|},
\qquad k\in\mathcal N_i,
\label{eq:zeta_uniform}
\end{align}
and hence
\begin{align}
\gamma_{ik}^{(0)}
=
\frac{
|\mathcal N_i|\,k_{\mathrm Q_i}V_i^{\rm s}|B_{ik}|
}{
\tau_{\mathrm Q_i}\lambda_i
},
\qquad k\in\mathcal N_i .
\label{eq:gamma_uniform}
\end{align}
This choice treats all incoming neighbors of node $i$ equally in the
normalization weights. The gain preserves the edge dependence through $|B_{ik}|$, but
it also produces the degree factor $|\mathcal N_i|$. Consequently, high-degree
nodes may be penalized even when many of their incident couplings are weak.

The proportional normalization is recovered at $x=1$. In this case,
\eqref{eq:zeta_powerlaw} gives
\begin{align}
\zeta_{ik}^{(1)}
=
\frac{|B_{ik}|}
{\sum_{\ell\in\mathcal N_i}|B_{i\ell}|},
\qquad k\in\mathcal N_i,
\label{eq:zeta_proportional}
\end{align}
and hence
\begin{align}
\gamma_{ik}^{(1)}
=
\frac{k_{\mathrm Q_i}V_i^{\rm s}}
{\tau_{\mathrm Q_i}\lambda_i}
\sum_{\ell\in\mathcal N_i}|B_{i\ell}|,
\qquad k\in\mathcal N_i .
\label{eq:gamma_proportional}
\end{align}
As the normalization weights are proportional to the incident susceptive couplings, the explicit dependence of the node-to-node gains on the individual
edge strength $|B_{ik}|$ is neutralized. Consequently, all incoming gains to the same receiving node $i$ are equal. Values
$0<x<1$ interpolate between the uniform and proportional normalization, reducing the
degree-scaled worst-neighbor effect while still retaining some dependence on
the individual edge strength $|B_{ik}|$. \hfill $\triangle$
\end{rmk}

\subsection{Cluster-Based Stability Certification}

\begin{thm}[Cluster-based stability criterion]
\label{thm:hierarchical_voltage_stability}
Let $\mathcal V=\{1,\ldots,n\}$ denote the set of grid-forming inverter nodes.
Consider the voltage subsystem
\eqref{eq:linearized_volt_dynamics}-\eqref{eq:Di_def}, obtained under
Assumptions \ref{ass_lossless}-\ref{ass:small_angle_decoupling}. Assume that, for every
$i\in\mathcal V$,
\begin{align}
\tau_{\mathrm{Q}_i},k_{\mathrm{Q}_i}>0,
\qquad
1+k_{\mathrm{Q}_i}D_i>0,
\label{eq:kQ_bound_hierarchical}
\end{align}
where $D_i$ is given by \eqref{eq:Di_def}. Consequently,
\begin{align*}
\lambda_i
=
\frac{1+k_{\mathrm{Q}_i}D_i}{\tau_{\mathrm{Q}_i}}
>0,
\qquad i\in\mathcal V .
\end{align*}
Let the node-to-node gains $\gamma_{ik}\geq 0$, the cluster-to-cluster gains
$\bar\gamma_{\alpha\beta}\geq 0$, and the effective inter-cluster neighbor sets
$\mathcal N_\alpha^{\rm cluster}$ be defined as in
Definition \ref{def:hierarchical_voltage_gains}. Suppose that the following
conditions hold.

\begin{itemize}
\item[(i)] \textbf{Intra-cluster cycle condition.} For each cluster $\mathcal A_\alpha$, every directed cycle contained entirely in $\mathcal A_\alpha$
satisfies
\begin{align}
\gamma_{j_1j_2}\gamma_{j_2j_3}\cdots\gamma_{j_rj_1}<1,
\label{eq:internal_cluster_cycle_condition}
\end{align}
for all $r\in\{2,\ldots,|\mathcal A_\alpha|\}$ and all pairwise distinct
$j_1,\ldots,j_r\in\mathcal A_\alpha$.

\item[(ii)] \textbf{Inter-cluster path condition.} For every cluster $\mathcal A_\alpha$,
\begin{align}
\max_{\beta\in\mathcal N_\alpha^{\rm cluster}}
\bar\gamma_{\alpha\beta}
<1.
\label{eq:inter_cluster_gain_condition}
\end{align}
\end{itemize}
Then the origin $\delta\boldsymbol V=\boldsymbol 0_n$ of \eqref{eq:linearized_volt_dynamics}-\eqref{eq:Di_def}, where
$\delta\boldsymbol V:=\operatorname{col}(\delta V_i)_{i\in\mathcal V}
\in\mathbb R^n$, is exponentially stable. That is, there exist constants
$\lambda_{\rm v}>0$ and $M_{\rm v}>0$ such that, for all $t\geq t_0$,
\begin{align}
\|\delta\boldsymbol V(t)\|_2
\leq
M_{\rm v} e^{-\lambda_{\rm v}(t-t_0)}
\|\delta\boldsymbol V(t_0)\|_2 .
\label{eq:eiss}
\end{align}
\end{thm}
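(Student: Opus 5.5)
The plan is to reduce the linear voltage subsystem to a network of scalar comparison inequalities with MAX-type interconnection gains. Then I would apply the cyclic small-gain theorem of \cite{karafyllis2011vector}. The remaining combinatorial task is to show that conditions (i)--(ii) force every simple directed cycle of the whole network, not only those inside a cluster, to have gain product below one.

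\emph{Step 1: comparison system.} Using \eqref{eq:gammaik_def}, for $k\in\mathcal N_i$ the coupling coefficient is $\frac{k_{\mathrm Q_i}V_i^{\rm s}|B_{ik}|}{\tau_{\mathrm Q_i}}=\lambda_i\zeta_{ik}\gamma_{ik}$. Hence \eqref{eq:linearized_volt_dynamics} reads
\begin{align*}
\delta\dot V_i=-\lambda_i\delta V_i+\lambda_i\sum_{k\in\mathcal N_i}\zeta_{ik}\gamma_{ik}\delta V_k .
\end{align*}
Because $\zeta_{ik}>0$ and $\sum_k\zeta_{ik}=1$ by \eqref{eq:zeta_allocation}, the sum is a convex combination. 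Its absolute value is therefore at most $\max_{k}\gamma_{ik}|\delta V_k|$. This gives the Dini-derivative bound
\begin{align*}
D^+|\delta V_i|\le-\lambda_i|\delta V_i|+\lambda_i\max_{k\in\mathcal N_i}\gamma_{ik}|\delta V_k| ,
\end{align*}
with $\lambda_i>0$ by \eqref{eq:kQ_bound_hierarchical}. By variation of constants, each node is exponentially ISS with respect to its neighbours, with linear MAX-gains $\gamma_{ik}$ and no self-gain, since $\gamma_{ii}=0$.

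\emph{Step 2: global cycle condition from (i)--(ii).} Take any simple directed cycle $(j_1,\ldots,j_r,j_1)$ with $r\ge2$.
\begin{itemize}
\item If it lies in a single cluster, its product is below one by (i).
\item Otherwise, rotate it so that it starts right after a cluster change. Then cut it into maximal consecutive blocks $i_0,\ldots,i_{\ell-1}\in\mathcal A_\alpha$, each followed by a node $i_\ell\in\mathcal A_\beta$ with $\beta\neq\alpha$.
\end{itemize}
In the second case, each block together with its exit node is a simple path of the type in \eqref{eq:cluster_gain_def}. All of its nodes are pairwise distinct, including $i_\ell$; in the two-block case $i_\ell$ may coincide with the starting node of the cycle, but it still differs from $i_0,\ldots,i_{\ell-1}$. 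So the block's product is at most $\bar\gamma_{\alpha\beta}$. If $\bar\gamma_{\alpha\beta}=0$, the whole cycle product vanishes. Otherwise $\beta\in\mathcal N_\alpha^{\rm cluster}$ and $\bar\gamma_{\alpha\beta}<1$ by (ii). The cycle product is the product of the block products, so it is strictly less than one. Since there are finitely many simple cycles, there is $\mu\in(0,1)$ with every simple-cycle product at most $\mu^{\text{length}}$.

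\emph{Step 3: exponential stability.} I would either invoke the linear-gain cyclic small-gain theorem of \cite{karafyllis2011vector} directly, or make it explicit, which is the step I expect to need the most care. For the explicit route, set $\tilde\gamma_{ik}:=\gamma_{ik}/\mu$, so that all simple cycles of $\tilde\gamma$ still have product at most one. Define
\begin{align*}
w_i:=\max\Big\{1,\ \textstyle\prod_{\nu}\tilde\gamma_{i_\nu i_{\nu+1}}\ \text{over simple paths starting at } i\Big\}>0 .
\end{align*}
The key claim is $\tilde\gamma_{ik}w_k\le w_i$, and hence $\gamma_{ik}w_k\le\mu w_i$. To prove it, prepend $i$ to an optimal path from $k$. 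If that path already contains $i$, truncate it at $i$: the discarded portion closes a cycle, whose product is at most one, so the prepended path still bounds $\tilde\gamma_{ik}w_k$.

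Then take $W(\delta\boldsymbol V):=\max_i|\delta V_i|/w_i$ and let $\underline\lambda:=\min_i\lambda_i$. Step 1 yields $D^+W\le-(1-\mu)\underline\lambda\,W$, so $W$ decays exponentially. Since norms are equivalent and $w_i$ is bounded above and below, this gives \eqref{eq:eiss} with $\lambda_{\rm v}=(1-\mu)\underline\lambda$ and $M_{\rm v}$ depending on $\max_i w_i/\min_i w_i$ and $\sqrt n$.
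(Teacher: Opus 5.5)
Your proof is correct. Step 2 is the same reduction the paper uses. The paper also proves the theorem by verifying the global cycle condition of the centralized certificate (Proposition~\ref{thm:centralized_small_gain_certificate}), cutting a multi-cluster simple cycle into maximal single-cluster segments; each segment together with its exit node is admissible in \eqref{eq:cluster_gain_def}. Your parenthetical about the two-block case is slightly off: the exit node of the last block is $j_1$ for any number of blocks, and it differs from that block's nodes simply because it lies in another cluster.

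The genuine difference is Step 3. The paper proves the centralized certificate by a trajectory-based small-gain argument:
\begin{itemize}
\item variation of constants and weighted suprema $\sup_\tau e^{\eta(\tau-t_0)}|\delta V_i(\tau)|$;
\item conversion of the initial-condition-plus-coupling sum into a maximum via $a+b\le\max\{(1+\varepsilon)a,(1+\varepsilon^{-1})b\}$;
\item Lemma~\ref{lm: bound_with_initial} from \cite{karafyllis2011vector}.
\end{itemize}
This requires a continuity argument for small $\eta$ and a large $\varepsilon$ with $(1+\varepsilon^{-1})^n\sigma_\eta<1$. The resulting constants $\lambda_{\rm v}=\eta$ and $M_{\rm v}=(1+\varepsilon)\|\boldsymbol\Psi_\eta(\boldsymbol 1_n)\|_2$ are only implicit.

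You instead build a max-separable Lyapunov function whose weights are maximal path products of the rescaled gains $\gamma_{ik}/\mu$. The cycle condition enters only through your truncation argument for $\gamma_{ik}w_k\le\mu w_i$: the tail of the optimal path from $k$, starting at its visit to $i$, is a simple path from $i$, and the discarded prefix closes a simple cycle with product at most one. This route is self-contained and avoids the $\varepsilon$-inflation. It also gives explicit constants $\lambda_{\rm v}=(1-\mu)\min_i\lambda_i$ and $M_{\rm v}=\sqrt n\,\max_i w_i/\min_i w_i$.

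It also shows what (i)--(ii) buy. Combining $\gamma_{ik}w_k\le\mu w_i$ with your convex-combination bound gives $\sum_{k\in\mathcal N_i}\frac{k_{\mathrm Q_i}V_i^{\rm s}|B_{ik}|}{\tau_{\mathrm Q_i}}w_k\le\mu\lambda_iw_i$. This is strict row diagonal dominance of $\operatorname{diag}(w)^{-1}\boldsymbol A_{\rm v}\operatorname{diag}(w)$, so Gershgorin alone already yields Hurwitzness. It extends the paper's diagonal-dominance reading of the proportional decentralized certificate to every partition and normalization.

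The paper's route, in exchange, stays within the cyclic small-gain framework of \cite{karafyllis2011vector}, which is designed for nonlinear gain functions and external inputs. Your rescaling by $\mu$ and product-form weights rely on linear gains. You should state explicitly two standard facts you use: the active-index rule for the upper Dini derivative of a finite maximum, and the comparison lemma for $D^+W\le -cW$.
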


The proof is provided in the Appendix.

\begin{rmk} \rm
    Although Theorem \ref{thm:hierarchical_voltage_stability} is stated for the droop-controlled GFM voltage subsystem \eqref{eq:linearized_volt_dynamics}-\eqref{eq:Di_def}, the same cluster-based certificate applies to any networked scalar linear system of the form
\begin{align}\label{eq:general_dynamics}
        \dot{x}_i = -\lambda_i x_i + \sum_{k\in\mathcal N_i}a_{ik}x_k, \quad i\in\{1,\ldots,n\},
    \end{align}
   where $\lambda_i>0$ and $a_{ik}\neq0$ for
$k\in\mathcal N_i$. For any positive normalization weights $\zeta_{i\ell}$ satisfying $\sum_{\ell\in\mathcal N_i}\zeta_{i\ell}=1$, the corresponding node-to-node gains are defined by
    \begin{align*}
        \gamma_{ik} = \frac{|a_{ik}|}{\lambda_i \zeta_{ik}}.
    \end{align*}
The same intra-cluster cycle and inter-cluster path conditions then provide a sufficient condition for exponential stability. Hence, the framework is not restricted to droop-controlled GFM inverters and can also accommodate other devices, including suitable dynamic loads, whenever their linearized nodal dynamics admit the form \eqref{eq:general_dynamics}.
\end{rmk}

The conditions of Theorem \ref{thm:hierarchical_voltage_stability} can be
verified locally, with each cluster $\mathcal A_\alpha$
assigned to a dedicated local computer. Once the node-to-node gains have been computed locally using the receiving node's controller parameters and one-hop incident electrical data, condition (i) is intra-cluster: enumerating the simple directed cycles
in $\mathcal A_\alpha$ and evaluating the products
$\gamma_{j_1j_2}\cdots\gamma_{j_rj_1}$ require only the node-to-node gains
$\{\gamma_{pq}\}_{p,q\in\mathcal A_\alpha}$. Condition (ii) requires the
cluster-to-cluster gains $\bar\gamma_{\alpha\beta}$ for
$\beta\in\mathcal N_\alpha^{\rm cluster}$; by
Definition \ref{def:hierarchical_voltage_gains}, each such
$\bar\gamma_{\alpha\beta}$ is the maximum product along admissible simple paths
whose intermediate nodes lie in $\mathcal A_\alpha$ and the terminal node
lies in a neighboring cluster $\mathcal A_\beta$. Its evaluation thus
requires only the intra-cluster gains $\{\gamma_{pq}\}_{p,q\in\mathcal A_\alpha}$
and the boundary gains $\{\gamma_{ik}\}_{i\in\mathcal A_\alpha,\,k\in\mathcal A_\beta}$ connecting $\mathcal A_\beta$ to $\mathcal A_\alpha$, for
$\beta\in\mathcal N_\alpha^{\rm cluster}$. No information about non-neighboring clusters or the global network topology is required. See Fig. \ref{fig:cluster} (top right).

The conditions (i) and (ii) of
Theorem \ref{thm:hierarchical_voltage_stability} lead to the following two intra- and inter-cluster stability indices.

\begin{defn}[Cluster-level stability indices]
\label{def:cluster_stability_indices}
For each cluster $\mathcal A_\alpha$, define the \underline{intra-cluster stability
index}
\begin{align}
\begin{split}
\mathcal C_\alpha
:=
\max\Bigl\{
\prod_{\nu=1}^{r}&\gamma_{j_\nu j_{\nu+1}}
\,:\,
r\in\{2,\ldots,|\mathcal A_\alpha|\},\\
&\quad
j_1,\ldots,j_r\in\mathcal A_\alpha
\text{ pairwise distinct}
\Bigr\},
\label{eq:intra_cluster_index}\end{split}
\end{align}
with the cyclic convention $j_{r+1}:=j_1$.

The \underline{inter-cluster stability index} is defined by 
\begin{align}
\Omega_\alpha
:=
\max_{\beta\in\mathcal N_\alpha^{\rm cluster}}
\bar\gamma_{\alpha\beta}.
\label{eq:inter_cluster_index}
\end{align}
 The node-to-node gains $\gamma_{ik}\geq 0$ and cluster-to-cluster gains $\bar\gamma_{\alpha\beta}\geq 0$ are as defined in Definition \ref{def:hierarchical_voltage_gains}. See Fig. \ref{fig:cluster} (top left).
\end{defn}

In terms of these indices, conditions (i) and (ii) of
Theorem \ref{thm:hierarchical_voltage_stability} respectively read
\begin{align}
\mathcal C_\alpha<1
\qquad\text{and}\qquad
\Omega_\alpha<1,
\qquad
\forall \alpha\in\{1,\ldots,m\}.
\label{eq:index_conditions}
\end{align} 
The index $\mathcal C_\alpha$ captures the worst-case cyclic
gain along simple directed cycles in $\mathcal A_\alpha$, while $\Omega_\alpha$
captures the worst-case gain along simple paths
that start in $\mathcal A_\alpha$ and terminate in a cluster
$\mathcal A_\beta$ with $\beta\in\mathcal N_\alpha^{\rm cluster}$.

\begin{rmk}\rm 
Consider the trivial partition consisting of a single cluster corresponding to the entire network. In this case, there are no neighboring clusters, and hence the inter-cluster path condition in Theorem \ref{thm:hierarchical_voltage_stability} is vacuously satisfied. The intra-cluster cycle condition, however, applies to every simple directed cycle in the network. Therefore, Theorem \ref{thm:hierarchical_voltage_stability} reduces to a centralized cyclic small-gain sufficient condition \eqref{eq:internal_cluster_cycle_condition}  for exponential stability of the voltage subsystem. See Proposition \ref{thm:centralized_small_gain_certificate} in the Appendix. Moreover, because $\boldsymbol A_{\rm v}$ is Metzler, its Hurwitz stability implies $\lambda_i>0$ for all $i\in\mathcal V$. A Perron-vector-based normalization can then be chosen so that the  single-cluster certificate, consisting of the conditions $\lambda_i>0$ and the directed cycle conditions \eqref{eq:internal_cluster_cycle_condition}, is \textit{equivalent} to the Hurwitz stability of $\boldsymbol A_{\rm v}$. This normalization requires global network information, and therefore is specific to the centralized setting.
\end{rmk}

\subsection{Decentralized Stability Certification}
\label{sec:volt_stab_decentralized}

We now extract a decentralized node-level stability condition from
Theorem \ref{thm:hierarchical_voltage_stability} by specializing the cluster
partition to the singleton case, where each cluster consists of a single
node. Specifically, take
\begin{align*}
m=n,
\qquad
\mathcal A_i=\{i\},
\qquad
i\in\mathcal V.
\end{align*}
Here, $m$ is the number of pairwise disjoint nonempty clusters; see Definition \ref{def:hierarchical_voltage_gains}. Under this partition, condition (i) of
Theorem \ref{thm:hierarchical_voltage_stability} is vacuous, since every
cluster $\mathcal A_i=\{i\}$ is a singleton and therefore contains no directed
cycle with $r\geq 2$ pairwise distinct nodes. To analyze condition (ii),
observe that for any distinct $\alpha,\beta\in\{1,\ldots,n\}$ with
$\mathcal A_\alpha=\{i\}$ and $\mathcal A_\beta=\{k\}$, the admissible paths in
Definition \ref{def:hierarchical_voltage_gains} reduce to a single one-step
path from $i$ to $k$. The cluster-to-cluster gain therefore collapses to
$\bar\gamma_{\alpha\beta}=\gamma_{ik}$, and the effective inter-cluster neighbor
set $\mathcal N_\alpha^{\rm cluster}$ coincides with the actual neighborhood
$\mathcal N_i$. Condition (ii) of
Theorem \ref{thm:hierarchical_voltage_stability} thus reduces to a per-node
small-gain condition involving only $\gamma_{ik}$ for $k\in\mathcal N_i$. We
formally express this consequence in the following corollary.

\begin{cor}[Decentralized stability criterion]
\label{cor:decentralized_voltage_stability} Let $\mathcal V=\{1,\ldots,n\}$ denote the set of grid-forming inverter nodes. Consider the voltage subsystem
\eqref{eq:linearized_volt_dynamics}-\eqref{eq:Di_def}, obtained under
Assumptions \ref{ass_lossless}-\ref{ass:small_angle_decoupling}. Assume that, for every $i\in\mathcal V$,
$\tau_{\mathrm Q_i}$ and $k_{\mathrm Q_i}$ satisfy
\eqref{eq:kQ_bound_hierarchical}. Let the node-to-node gains
$\gamma_{ik}\ge 0$ be defined as in
Definition \ref{def:hierarchical_voltage_gains}. If
\begin{align}
\max_{k\in\mathcal N_i}\gamma_{ik}<1,
\qquad
\forall i\in\mathcal V,
\label{eq:stability_index}
\end{align}
then the voltage subsystem
\eqref{eq:linearized_volt_dynamics}-\eqref{eq:Di_def} is exponentially stable, and the estimate \eqref{eq:eiss} holds for all $t\ge t_0$.
\end{cor}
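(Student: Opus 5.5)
The plan is to obtain the corollary as a direct specialization of Theorem \ref{thm:hierarchical_voltage_stability} to the singleton partition $m=n$, $\mathcal A_i=\{i\}$. This is legitimate because it is a partition of $\mathcal V$ into pairwise disjoint nonempty clusters. The standing hypothesis \eqref{eq:kQ_bound_hierarchical} is assumed verbatim, so $\lambda_i>0$ for all $i$. Hence the gains $\gamma_{ik}$ of Definition \ref{def:hierarchical_voltage_gains} are well defined and nonnegative for any admissible normalization weights $\zeta_{ik}$. It therefore only remains to check conditions (i) and (ii) of the theorem under this partition.

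\textbf{Condition (i).} This is immediate. Each cluster $\{i\}$ has $|\mathcal A_i|=1$, so the index range $r\in\{2,\ldots,|\mathcal A_i|\}$ is empty and no cycle needs checking. Condition (i) therefore holds vacuously.

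\textbf{Condition (ii).} Here I compute $\bar\gamma_{\alpha\beta}$ explicitly for $\mathcal A_\alpha=\{i\}$ and $\mathcal A_\beta=\{k\}$ with $i\neq k$.
\begin{itemize}
\item An admissible path $(i_0,\ldots,i_\ell)$ in \eqref{eq:cluster_gain_def} requires $i_0,\ldots,i_{\ell-1}\in\{i\}$, together with pairwise distinctness of all nodes.
\item Pairwise distinctness forces $\ell-1\le 0$, that is, $\ell=1$, with $i_0=i$ and $i_1=k$.
\item Hence the maximum is over the single product $\gamma_{ik}$, so $\bar\gamma_{\alpha\beta}=\gamma_{ik}$.
\end{itemize}
By \eqref{eq:gammaik_def}, $\gamma_{ik}>0$ exactly when $k\in\mathcal N_i$. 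This uses $|B_{ik}|>0$ for neighbors, together with $k_{\mathrm Q_i}$, $V_i^{\rm s}$, $\tau_{\mathrm Q_i}$, $\lambda_i$ and $\zeta_{ik}$ all positive. Consequently $\mathcal N_\alpha^{\rm cluster}$ is identified with $\mathcal N_i$, and condition (ii) reads $\max_{k\in\mathcal N_i}\gamma_{ik}<1$, which is exactly \eqref{eq:stability_index}.

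For completeness, note that the graph is connected with $n\ge2$, so $\mathcal N_i\neq\emptyset$. This avoids relying on the empty-max convention, although that convention would also be harmless here.

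With both conditions verified, Theorem \ref{thm:hierarchical_voltage_stability} yields exponential stability and the estimate \eqref{eq:eiss}. There is no real obstacle in this argument. The only point requiring care is the path-collapse step in condition (ii): pairwise distinctness of the path nodes is what rules out longer paths that revisit $i$.
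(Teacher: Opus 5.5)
Your proposal is correct and follows the paper's own argument. Like the paper, you specialize Theorem~\ref{thm:hierarchical_voltage_stability} to the singleton partition, note that condition (i) is vacuous, and collapse each $\bar\gamma_{\alpha\beta}$ to the one-step gain $\gamma_{ik}$, so that $\mathcal N_\alpha^{\rm cluster}$ is identified with $\mathcal N_i$ and condition (ii) becomes \eqref{eq:stability_index}; your explicit justifications (pairwise distinctness forcing $\ell=1$, and $\gamma_{ik}>0$ exactly when $k\in\mathcal N_i$) spell out steps the paper only asserts.
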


\begin{rmk}\label{rem:kQi_bound}\rm
A more explicit sufficient condition ensuring that both \eqref{eq:kQ_bound_hierarchical} and \eqref{eq:stability_index} are satisfied can be derived via conservative bounding arguments. In particular, it can be shown that the conditions of Corollary \ref{cor:decentralized_voltage_stability} hold if
\begin{align*}
\begin{split}
0 < k_{\mathrm Q_i} < \Big(|D_i|
+ V_i^{\rm s} \max_{k \in \mathcal N_i} \big(|B_{ik}|\zeta_{ik}^{-1}\big) 
\Big)^{-1},
\end{split}
\end{align*}
for every $i\in\mathcal V$, where $D_i$ is given by \eqref{eq:Di_def}. \hfill $\triangle$ 
\end{rmk}

The condition \eqref{eq:stability_index} of
Corollary \ref{cor:decentralized_voltage_stability} can be verified by each
inverter node $i\in\mathcal V$ independently. Indeed, the quantity $\max_{k\in\mathcal N_i}\gamma_{ik}$
depends only
on the node-to-node gains $\{\gamma_{ik}\}_{k\in\mathcal N_i}$. Once a normalization choice is adopted,
these gains are computed from the local controller and operating-point data
$k_{\mathrm Q_i},V_i^{\rm s}$ at node $i$, the
self-susceptance $|B_{ii}|$, and the one-hop incident electrical data
$\{|B_{i\ell}|,V_\ell^{\rm s}\}_{\ell\in\mathcal N_i}$. Consequently,
each inverter can certify its own contribution to network voltage stability
using only its own information and one-hop information from its
immediate neighbors. See Fig. \ref{fig:nodeFig} (right). The explicit design rule in Remark \ref{rem:kQi_bound} further refines this to a closed-form bound on $k_{\mathrm{Q}_i}$ in terms of locally available
quantities, enabling each inverter to independently select its droop gain in a
manner that, by Corollary \ref{cor:decentralized_voltage_stability}, guarantees
exponential stability of the network voltage dynamics around the given operating point. 

The decentralized condition \eqref{eq:stability_index} of
Corollary \ref{cor:decentralized_voltage_stability} leads to the following node-level stability index.

\begin{defn}[Node-level stability index]
\label{def:node_stability_index}
For each inverter node $i\in\mathcal V$, define the \emph{node-level stability
index}
\begin{align}
\Xi_i
:=
\max_{k\in\mathcal N_i}\gamma_{ik}.
\label{eq:node_stability_index}
\end{align}
The node-to-node gains $\gamma_{ik}\geq 0$ are defined in Definition \ref{def:hierarchical_voltage_gains}. See Fig. \ref{fig:nodeFig} (left). 
\end{defn}

In terms of this index, the condition \eqref{eq:stability_index} of
Corollary \ref{cor:decentralized_voltage_stability} reads
\begin{align*}
\Xi_i<1,
\qquad
\forall i\in\mathcal V .
\end{align*}
The index $\Xi_i$ captures the worst-case incoming gain at node $i$ from its
neighbors in $\mathcal N_i$.

\subsection{Interpretation of Stability Indices}\label{sec:sparsification}

\subsubsection{\textbf{Decentralized node-level stability index}} Using \eqref{eq:gammaik_def}, each incoming gain at node $i$ can be rearranged as
\begin{align*}
\gamma_{ik}
=
\underbrace{
\left(
\frac{k_{\mathrm{Q}_i}V_i^{\rm s}|B_{ik}|}
{\tau_{\mathrm{Q}_i}\zeta_{ik}}
\right)
}_{\text{normalized incoming coupling burden}}
\bigg/
\underbrace{\lambda_i}_{\text{local damping}},
\qquad k\in\mathcal N_i .
\end{align*}
Thus, $\gamma_{ik}$ compares the normalized voltage-coupling burden from node
$k$ to node $i$ with the local damping available at node $i$. Consequently, the node-level stability index $\Xi_i$ defined in \eqref{eq:node_stability_index} represents the largest incoming coupling-to-damping ratio associated with node $i$ under the selected normalization weights. This interpretation can be understood by identifying the role of each term in $\gamma_{ik}$ as follows:
\begin{itemize}
    \item $\lambda_i$ as given by \eqref{eq:lambda_i},\eqref{eq:Di_def} is the local exponential decay rate of the isolated voltage
    dynamics at node $i$; see \eqref{eq:linearized_volt_dynamics}-\eqref{eq:Di_def}. A larger $\lambda_i$ means stronger local damping and
    therefore a smaller incoming gain $\gamma_{ik}$.

\item The factor $k_{\mathrm{Q}_i}V_i^{\rm s}/\tau_{\mathrm{Q}_i}$ scales the local voltage-control sensitivity at node $i$; see \eqref{eq:linearized_volt_dynamics}. The larger this factor, the greater the sensitivity to neighboring voltage deviations. However, $\tau_{\mathrm{Q}_i}$ cancels from the gain $\gamma_{ik}$ via $\lambda_i$.

    \item $|B_{ik}|$ measures the strength of the susceptive coupling
    between nodes $i$ and $k$. Larger $|B_{ik}|$ means that voltage deviations at
    node $k$ have a stronger effect on the voltage dynamics of node $i$.

    \item $\zeta_{ik}$ is a normalization weight assigned to the incoming voltage-coupling channel from node $k$ to node $i$. The constraint $\sum_{k\in\mathcal N_i}\zeta_{ik}=1$ makes these normalization weights relative across all incoming channels at node $i$.  See Remark \ref{rmk:zeta_choices} for details.
\end{itemize}

Accordingly, the condition $\Xi_i<1$ means that, under the chosen normalization weights, the local damping at node $i$ is large enough, relative to its voltage-control sensitivity and incoming network couplings, to dominate the strongest certified incoming channel. If $\Xi_i\geq 1$, the decentralized certificate fails at node $i$ for the selected weights; this does not necessarily imply instability, and it may still be possible to certify stability using a different choice of normalization weights or the cluster-based stability criterion.

The effect of the normalization choice is especially transparent for the decentralized certificate. Under proportional normalization
\eqref{eq:zeta_proportional},\eqref{eq:gamma_proportional}, and \eqref{eq:node_stability_index} give
\begin{align}
\Xi_i^{\rm prop}
=
\frac{k_{\mathrm{Q}_i}V_i^{\rm s}}
{\tau_{\mathrm{Q}_i}\lambda_i}
\sum_{\ell\in\mathcal N_i}|B_{i\ell}|.
\label{eq:prop_index}
\end{align}
By contrast, under uniform normalization \eqref{eq:zeta_uniform},
\begin{align}
\Xi_i^{\rm unif}
=
\frac{k_{\mathrm{Q}_i}V_i^{\rm s}}
{\tau_{\mathrm{Q}_i}\lambda_i}
|\mathcal N_i|
\max_{k\in\mathcal N_i}|B_{ik}|.
\label{eq:unif_index}
\end{align}
Since $\sum_{\ell\in\mathcal N_i}|B_{i\ell}|
\leq
|\mathcal N_i|
\max_{k\in\mathcal N_i}|B_{ik}|$, the proportional normalization is never more conservative than the uniform normalization for the decentralized certificate. The power-law family \eqref{eq:zeta_powerlaw} provides intermediate choices.

In fact, the proportional choice \eqref{eq:zeta_proportional} is optimal among
all admissible normalization weights satisfying \eqref{eq:zeta_allocation}, in
the sense of minimizing the worst incoming gain
$\max_{k\in\mathcal N_i}\gamma_{ik}$ at each node $i$. To see this, let
$M_{ik}:=k_{\mathrm{Q}_i}V_i^{\rm s}|B_{ik}|/
(\tau_{\mathrm{Q}_i}\lambda_i)$. For any admissible weights satisfying \eqref{eq:zeta_allocation},
\begin{align*}
\max_{k\in\mathcal N_i}\frac{M_{ik}}{\zeta_{ik}}
\geq \sum_{k\in\mathcal N_i}M_{ik},
\end{align*}
where equality is attained by
$\zeta_{ik}=M_{ik}/\sum_{\ell\in\mathcal N_i}M_{i\ell}$, which is precisely
the proportional normalization \eqref{eq:zeta_proportional}. Hence,
proportional normalization minimizes $\Xi_i$ at each node.

The proportional certificate also admits a familiar matrix interpretation. Provided $\lambda_i>0$, the condition
$\Xi_i^{\rm prop}<1$ is equivalent to
\begin{align*}
\lambda_i
>
\sum_{k\in\mathcal N_i}
\frac{k_{\mathrm Q_i}V_i^{\rm s}|B_{ik}|}{\tau_{\mathrm Q_i}},
\qquad i\in\mathcal V,
\end{align*}
which is the strict row diagonal dominance of $\boldsymbol A_{\rm v}$ with negative diagonal entries. Hence, by the Gershgorin circle theorem, all eigenvalues of $\boldsymbol A_{\rm v}$ lie in
the open left half-plane, and $\boldsymbol A_{\rm v}$ is Hurwitz
\cite{horn2012matrix}.

\begin{rmk}[Connection with the decentralized criterion of \cite{siahaan2024decentralized}]\rm
Under the no-shunt condition
$|B_{ii}|=\sum_{k\in\mathcal N_i}|B_{ik}|$, the decentralized criterion
proposed in \cite{siahaan2024decentralized} is
\begin{align}
k_{\mathrm Q_i}|B_{ii}|
\big(
\max_{k\in\mathcal N_i}V_k^{\rm s}-V_i^{\rm s}
\big)
<1,
\qquad i\in\mathcal V.
\label{eq:prop_max_voltage_bound}
\end{align}
Within the present framework, this condition is a sufficient condition for the
decentralized certificate under proportional normalization. Indeed,
\eqref{eq:prop_max_voltage_bound} guarantees both $\lambda_i>0$ and
$\Xi_i^{\rm prop}<1$. It is generally not equivalent to
$\Xi_i^{\rm prop}<1$, since it replaces the weighted aggregate of neighboring
voltage differences by an upper bound based on the largest neighboring voltage
difference.
\hfill\(\triangle\)
\end{rmk}

\subsubsection{\textbf{Cluster-level stability indices}} The cluster-based stability indices \eqref{eq:intra_cluster_index} and \eqref{eq:inter_cluster_index} separate the network's voltage interactions into intra-cluster effects, captured by
$\mathcal C_\alpha$, and inter-cluster effects, captured by $\Omega_\alpha$.

The intra-cluster index $\mathcal C_\alpha$ is the largest product of node-to-node
gains around any simple directed cycle contained entirely within cluster
$\mathcal A_\alpha$. Each factor $\gamma_{j_\nu j_{\nu+1}}$ along such a cycle
is itself the incoming coupling-to-damping ratio at node $j_\nu$ from node
$j_{\nu+1}$, in the sense established above for the node-level index. The
condition $\mathcal C_\alpha<1$ therefore requires that, for every feedback loop closing inside
$\mathcal A_\alpha$, the cumulative product of incoming coupling-to-damping
ratios along the loop is strictly less than one. Equivalently, no internal
feedback path in $\mathcal A_\alpha$ admits net amplification: a voltage deviation traversing such a loop returns to its starting node attenuated.

The inter-cluster index $\Omega_\alpha$ admits a similar physical reading. By
\eqref{eq:cluster_gain_def}, the cluster-to-cluster gain $\bar\gamma_{\alpha\beta}$ is
the largest product of node-to-node gains along any simple path that
originates at a node in $\mathcal A_\alpha$, propagates through distinct nodes
of $\mathcal A_\alpha$, and terminates at a single node in
$\mathcal A_\beta$. Since $\gamma_{ik}$ encodes the gain from node $k$ to node
$i$, this path quantifies the strongest channel through which a voltage
deviation originating in the source cluster $\mathcal A_\beta$ can influence the
receiving cluster $\mathcal A_\alpha$, accounting for amplification along
admissible propagation routes within $\mathcal A_\alpha$. The index
$\Omega_\alpha$ extracts the worst case among these incoming influences across all clusters in $\mathcal N_\alpha^{\rm cluster}$. The condition $\Omega_\alpha<1$ thus requires that the strongest inter-cluster influence feeding
into $\mathcal A_\alpha$ has net gain strictly less than one.

\subsubsection{\textbf{Comparison}}\label{sec:comparison_partition_normalization} Relative to the decentralized condition $\Xi_i<1$, the joint cluster-based conditions $\mathcal C_\alpha<1$ and $\Omega_\alpha<1$ replace per-node dominance by per-cluster cycle and path conditions. For any fixed admissible choice of normalization weights, the decentralized condition $\Xi_i<1$ for every $i\in\mathcal V$ implies both $\mathcal C_\alpha<1$ and $\Omega_\alpha<1$ for every $\alpha$. Indeed, every factor $\gamma_{i_\nu i_{\nu+1}}$ appearing in the cycle product defining $\mathcal C_\alpha$ or the path product defining $\bar\gamma_{\alpha\beta}$ is
bounded above by $\Xi_{i_\nu}<1$.  Hence all such products are strictly less than one.

The reverse implication does not hold. A node $i$ may satisfy $\Xi_i\geq 1$, for instance due to a dominant
neighbor $k\in\mathcal N_i$ with $\gamma_{ik}\geq 1$, while the cluster $\mathcal A_\alpha\ni i$ still satisfies both $\mathcal C_\alpha<1$ and $\Omega_\alpha<1$. However, for any fixed normalization, this is possible only if every channel responsible for the decentralized failure is internalized. Indeed, let
$i\in\mathcal A_\alpha$. If a node $k\in\mathcal N_i$ satisfying $\gamma_{ik}\geq1$ belongs to another cluster $\mathcal A_\beta$, the direct channel connecting $i$ and $k$ is an admissible inter-cluster path contributing to
$\bar\gamma_{\alpha\beta}$. Thus, $\Omega_\alpha\geq\bar\gamma_{\alpha\beta}\geq\gamma_{ik}\geq1$, and the
cluster certificate also fails. Clustering can therefore reduce conservatism only by placing all neighbors associated with the violating gains in the same cluster as the limiting node. The corresponding channels are then no longer assessed as direct one-edge inter-cluster paths, but instead enter intra-cluster cycle products and, where applicable, longer inter-cluster path products. These products may remain below one even when $\Xi_i\geq1$. Whether this reduction can be realized depends on how the selected normalization distributes the limiting gains among the neighbors of the failing node.

Under uniform normalization, the incoming gains at a particular node depend on the
coupling strengths $|B_{ik}|$; see \eqref{eq:gamma_uniform}. They are therefore generally unequal. Thus, if the uniform decentralized certificate fails at node $i$, that is, $\Xi_i^{\rm unif}\geq 1$,  only a subset of the incoming gains may satisfy $\gamma_{ik}\geq 1$. Consequently, clustering  may improve upon the decentralized certificate by placing all neighbors associated with these violating gains in the same cluster as the limiting node, without necessarily internalizing every neighbor of that node.

Under proportional normalization, the situation is different. From \eqref{eq:gamma_proportional} and \eqref{eq:prop_index}, all incoming gains to
a fixed receiving node $i$ are equal, that is,
$\gamma_{ik}=\Xi_i^{\rm prop}$ for every $k\in\mathcal N_i$. Hence, if the proportional decentralized certificate fails at node $i$, then $\gamma_{ik}\geq1$ for every $k\in\mathcal N_i$. Consequently, clustering can improve upon the decentralized certificate only by placing the limiting node and \textit{all} of its neighbors in the same cluster.

\section{Stability of the Angle-Frequency Subsystem}\label{sec:angle}

In this section, we analyze the stability of the angle-frequency subsystem \eqref{eq:af_theta_dyn}-\eqref{eq:af_omega_dyn_n}. To this end, we impose the following additional assumption.

\begin{as}[Reciprocal susceptive couplings]
\label{as:reciprocal_angle_frequency}
For every edge $\{i,k\}\in\mathcal E$, the  susceptive coupling is reciprocal:
\begin{align*}
B_{ik}=B_{ki}.
\end{align*}
\end{as}
For all $i,k\in\mathcal V$ with $i\neq k$, define the  weight
\begin{align*}
H_{ik}
:=
\begin{cases}
|B_{ik}|V_i^{\rm s}V_k^{\rm s}, & \{i,k\}\in\mathcal E,\\
0, & \{i,k\}\notin\mathcal E.
\end{cases}
\end{align*}
Under Assumption \ref{ass_lossless}, for every edge $\{i,k\}\in\mathcal E$,
one has $G_{ik}=0$. Since $Y_{ik}\neq 0$ for any $\{i,k\}\in\mathcal E$, we have $B_{ik}\neq 0$. Furthermore, it follows from Assumption \ref{ass_sync_operating_point} that
$V_i^{\rm s},V_k^{\rm s}>0$. Hence,
\begin{align*}
H_{ik}>0,
\qquad
\forall \{i,k\}\in\mathcal E,
\end{align*}
and $H_{ik}=0$ whenever $\{i,k\}\notin\mathcal E$. Moreover, under
Assumption \ref{as:reciprocal_angle_frequency},
\begin{align*}
H_{ik}=H_{ki},
\qquad
\forall i,k\in\mathcal V,\ i\neq k.
\end{align*}

Define the symmetric weighted Laplacian $\boldsymbol L_H\in\mathbb R^{n\times n}$ by
\begin{align*}
(\boldsymbol L_H)_{ii}
&:=
\sum_{k\in\mathcal V\setminus\{i\}}H_{ik},
\\
(\boldsymbol L_H)_{ik}
&:=
-H_{ik},
\qquad i\neq k.
\end{align*}
Partition $\boldsymbol L_H$ according to the non-reference nodes
$\mathcal V_0$ and the reference node $n$ as
\begin{align}
\boldsymbol L_H
=
\begin{bmatrix}
\boldsymbol L_{00} & \boldsymbol l_{0n}\\
\boldsymbol l_{n0} & l_{nn}
\end{bmatrix},
\label{eq:LH_partition}
\end{align}
where $\boldsymbol L_{00}\in\mathbb R^{(n-1)\times(n-1)}$,
$\boldsymbol l_{0n}\in\mathbb R^{n-1}$, and
$\boldsymbol l_{n0}\in\mathbb R^{1\times(n-1)}$.

Define $\delta\boldsymbol\theta :=
\operatorname{col}(\delta\theta_i)_{i\in\mathcal V_0}
\in\mathbb R^{n-1},\, 
\delta\boldsymbol\omega_0
:=
\operatorname{col}(\delta\omega_i)_{i\in\mathcal V_0}
\\\in\mathbb R^{n-1},\, 
\delta\boldsymbol\omega :=
\operatorname{col}(\delta\boldsymbol\omega_0,\delta\omega_n)
\in\mathbb R^n.$ Also define $\boldsymbol T_0 :=
\operatorname{diag}(\tau_{\mathrm{P}_i})_{i\in\mathcal V_0},\, 
\boldsymbol K_0 :=
\operatorname{diag}(k_{\mathrm{P}_i})_{i\in\mathcal V_0}.$ Then the decoupled small-angle reduced angle-frequency dynamics \eqref{eq:af_theta_dyn}-\eqref{eq:af_omega_dyn_n} can be written as
\begin{align}
\delta\dot{\boldsymbol\theta}
&=
\delta\boldsymbol\omega_0
-
\boldsymbol 1_{n-1}\delta\omega_n,
\label{eq:af_reduced_theta_vector}
\\
\boldsymbol T_0\delta\dot{\boldsymbol\omega}_0
&=
-\delta\boldsymbol\omega_0
-
\boldsymbol K_0\boldsymbol L_{00}\delta\boldsymbol\theta,
\label{eq:af_reduced_omega0_vector}
\\
\tau_{\mathrm{P}_n}\delta\dot\omega_n
&=
-\delta\omega_n
-
k_{\mathrm{P}_n}\boldsymbol l_{n0}\delta\boldsymbol\theta.
\label{eq:af_reduced_omegan_vector}
\end{align}

The symmetric weighted-Laplacian structure of the angle-frequency subsystem enables an energy-based stability analysis, with an energy function that combines the reduced angle potential and frequency kinetic terms. 

\begin{thm}[Angle-frequency stability]
\label{thm:af_reduced_energy_stability}
Consider the angle-frequency subsystem
\eqref{eq:af_reduced_theta_vector}-\eqref{eq:af_reduced_omegan_vector},
obtained under Assumptions \ref{ass_lossless}-\ref{as:reciprocal_angle_frequency}. If $k_{\mathrm{P}_i},\tau_{\mathrm{P}_i}>0,
i\in\mathcal V,$ then the origin of \eqref{eq:af_reduced_theta_vector}-\eqref{eq:af_reduced_omegan_vector}
\begin{align*}
(\delta\boldsymbol\theta,\delta\boldsymbol\omega)
=
(\boldsymbol 0_{n-1},\boldsymbol 0_n),
\end{align*}
is exponentially stable. Consequently, there exist constants $M_{\rm af}>0$ and
$\lambda_{\rm af}>0$ such that, for all $t\ge t_0$,
\begin{align}
\left\|
\begin{bmatrix}
\delta\boldsymbol\theta(t)\\
\delta\boldsymbol\omega(t)
\end{bmatrix}
\right\|_2
\le
M_{\rm af}e^{-\lambda_{\rm af}(t-t_0)}
\left\|
\begin{bmatrix}
\delta\boldsymbol\theta(t_0)\\
\delta\boldsymbol\omega(t_0)
\end{bmatrix}
\right\|_2.
\label{eq:af_reduced_exp_estimate}
\end{align}
\end{thm}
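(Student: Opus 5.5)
The plan is to build a quadratic energy (Lyapunov) function from the reduced angle potential and the frequency kinetic energy, show that its derivative is negative semidefinite, and then use a LaSalle/observability argument for linear systems to upgrade this to exponential stability. The two structural facts needed are that $\boldsymbol L_{00}\succ 0$ and that the reference-row term $\boldsymbol l_{n0}$ is tied to $\boldsymbol L_{00}$ through the zero-row-sum property of $\boldsymbol L_H$.

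\textbf{Step 1 (structure of $\boldsymbol L_H$).} By Assumption \ref{as:reciprocal_angle_frequency}, $\boldsymbol L_H$ is symmetric. By construction it is a weighted Laplacian, so $\boldsymbol L_H\boldsymbol 1_n=\boldsymbol 0_n$ and $\boldsymbol 1_n^\top\boldsymbol L_H=\boldsymbol 0_n^\top$. From the partition \eqref{eq:LH_partition} this gives
\begin{align*}
\boldsymbol l_{0n}=-\boldsymbol L_{00}\boldsymbol 1_{n-1},
\qquad
\boldsymbol l_{n0}=-\boldsymbol 1_{n-1}^\top\boldsymbol L_{00}.
\end{align*}
Next, $\boldsymbol L_{00}$ is the grounded Laplacian obtained by deleting the row and column of node $n$. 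Since $H_{ik}>0$ on every edge and the graph is connected (Assumption \ref{as:connected_reduced_graph}), $\boldsymbol L_{00}\succ 0$. To prove this, take $\boldsymbol z\in\mathbb R^{n-1}$ and write
\begin{align*}
\boldsymbol z^\top\boldsymbol L_{00}\boldsymbol z=\tfrac12\textstyle\sum_{i,k}H_{ik}(z_i-z_k)^2
\end{align*}
with the convention $z_n=0$. This vanishes only if $\boldsymbol z$ is constant on the connected graph, and the constant must be zero because $z_n=0$.

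\textbf{Step 2 (energy and its derivative).} Define
\begin{align*}
E:=\tfrac12\delta\boldsymbol\theta^\top\boldsymbol L_{00}\delta\boldsymbol\theta+\tfrac12\delta\boldsymbol\omega^\top\boldsymbol T_{\rm P}\boldsymbol K_{\rm P}^{-1}\delta\boldsymbol\omega .
\end{align*}
By Step 1 and $k_{\mathrm P_i},\tau_{\mathrm P_i}>0$, $E$ is positive definite. Differentiate along \eqref{eq:af_reduced_theta_vector}--\eqref{eq:af_reduced_omegan_vector}. The angle term contributes
\begin{align*}
\delta\boldsymbol\theta^\top\boldsymbol L_{00}\delta\boldsymbol\omega_0-\delta\boldsymbol\theta^\top\boldsymbol L_{00}\boldsymbol 1_{n-1}\delta\omega_n
=\delta\boldsymbol\theta^\top\boldsymbol L_{00}\delta\boldsymbol\omega_0+(\boldsymbol l_{n0}\delta\boldsymbol\theta)\delta\omega_n,
\end{align*}
where the equality uses $\boldsymbol l_{n0}=-\boldsymbol 1_{n-1}^\top\boldsymbol L_{00}$. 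The kinetic term contributes
\begin{align*}
-\delta\boldsymbol\omega^\top\boldsymbol K_{\rm P}^{-1}\delta\boldsymbol\omega-\delta\boldsymbol\omega_0^\top\boldsymbol L_{00}\delta\boldsymbol\theta-\delta\omega_n\,\boldsymbol l_{n0}\delta\boldsymbol\theta .
\end{align*}
The cross terms cancel exactly, so
\begin{align*}
\dot E=-\delta\boldsymbol\omega^\top\boldsymbol K_{\rm P}^{-1}\delta\boldsymbol\omega\le 0 .
\end{align*}
This cancellation, which hinges on the reference-node bookkeeping through $\boldsymbol l_{n0}=-\boldsymbol 1_{n-1}^\top\boldsymbol L_{00}$, is the step I expect to require the most care.

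\textbf{Step 3 (LaSalle and exponential stability).} On the set where $\dot E\equiv 0$ we have $\delta\boldsymbol\omega\equiv\boldsymbol 0_n$, hence $\delta\dot{\boldsymbol\omega}\equiv\boldsymbol 0_n$. Equation \eqref{eq:af_reduced_omega0_vector} then yields $\boldsymbol K_0\boldsymbol L_{00}\delta\boldsymbol\theta=\boldsymbol 0_{n-1}$, and since $\boldsymbol K_0$ is invertible and $\boldsymbol L_{00}\succ 0$, $\delta\boldsymbol\theta=\boldsymbol 0_{n-1}$. Thus the largest invariant set inside $\{\dot E=0\}$ is the origin, and LaSalle's invariance principle gives global asymptotic stability. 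For a linear time-invariant system, asymptotic stability is equivalent to the state matrix being Hurwitz. A Hurwitz matrix yields the estimate \eqref{eq:af_reduced_exp_estimate} with $\lambda_{\rm af}$ any number below the smallest magnitude of the real parts of its eigenvalues, and a suitable $M_{\rm af}$.
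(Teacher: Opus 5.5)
Your proposal is correct and follows essentially the same route as the paper. It uses the same energy function (your $\tfrac12\delta\boldsymbol\omega^\top\boldsymbol T_{\rm P}\boldsymbol K_{\rm P}^{-1}\delta\boldsymbol\omega$ is exactly the paper's two kinetic terms combined). The cross terms cancel via the zero row and column sums of the symmetric Laplacian; your identity $\boldsymbol l_{n0}=-\boldsymbol 1_{n-1}^\top\boldsymbol L_{00}$ is equivalent to the paper's use of $\boldsymbol L_{00}\boldsymbol 1_{n-1}=-\boldsymbol l_{0n}$ with $\boldsymbol l_{n0}=\boldsymbol l_{0n}^\top$. The final step is the same LaSalle argument followed by the linear-system upgrade to exponential stability. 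Your only addition is the explicit quadratic-form proof that $\boldsymbol L_{00}\succ 0$, which the paper simply cites as a standard property of grounded connected Laplacians.
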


\noindent\textbf{Proof.} By Assumption \ref{ass_sync_operating_point},
$\boldsymbol V^{\rm s}\in\mathbb R_{>0}^n$. Moreover, since we work under
Assumption \ref{ass_lossless}, for every edge $\{i,k\}\in\mathcal E$ we have
$G_{ik}=0$ and $Y_{ik}\neq 0$, hence $B_{ik}\neq 0$. Therefore, for every
$\{i,k\}\in\mathcal E$,
$H_{ik}
=
|B_{ik}|V_i^{\rm s}V_k^{\rm s}
>0.$ Thus the weighted graph associated with $\boldsymbol L_H$ has the same edge set
as the susceptance graph $\mathcal G$. Since $\mathcal G$ is connected by
Assumption \ref{as:connected_reduced_graph}, the weighted graph associated with
$\boldsymbol L_H$ is connected. Furthermore, it follows from Assumption \ref{as:reciprocal_angle_frequency} that $
\boldsymbol L_H=\boldsymbol L_H^\top$. Consequently, $\boldsymbol L_H$ is the symmetric weighted Laplacian of a
connected undirected graph with positive edge weights. Therefore, $\boldsymbol L_H\succeq 0$ and $\ker(\boldsymbol L_H)=\operatorname{span}\{\boldsymbol 1_n\}$. Moreover, every principal submatrix obtained by deleting one row and the
corresponding column of a connected weighted Laplacian is positive definite. Hence
\begin{align}
\boldsymbol L_{00}\succ 0.
\label{eq:L00_positive_definite}
\end{align}
Because $\boldsymbol L_H\boldsymbol 1_n=\boldsymbol 0_n$, the block partition
\eqref{eq:LH_partition} gives
\begin{align}
\boldsymbol L_{00}\boldsymbol 1_{n-1}
+
\boldsymbol l_{0n}
&=
\boldsymbol 0_{n-1},
\label{eq:L00_l0n_identity}
\\
\boldsymbol l_{n0}\boldsymbol 1_{n-1}
+
l_{nn}
&=
0.
\label{eq:ln0_lnn_identity}
\end{align}
Also, since $\boldsymbol L_H=\boldsymbol L_H^\top$, we have
\begin{align}
\boldsymbol l_{n0}
=
\boldsymbol l_{0n}^\top.
\label{eq:ln0_l0n_transpose}
\end{align}
Consider the Lyapunov function
\begin{align*}
E
:=
\frac{1}{2}
\delta\boldsymbol\theta^\top
\boldsymbol L_{00}
\delta\boldsymbol\theta
+
\frac{1}{2}
\delta\boldsymbol\omega_0^\top
\boldsymbol K_0^{-1}\boldsymbol T_0
\delta\boldsymbol\omega_0
+
\frac{1}{2}
\frac{\tau_{\mathrm{P}_n}}{k_{\mathrm{P}_n}}
(\delta\omega_n)^2.
\end{align*}
By \eqref{eq:L00_positive_definite},
$E$ is positive definite with respect to
$(\delta\boldsymbol\theta,\delta\boldsymbol\omega_0,\delta\omega_n)$. Taking the derivative of $E$ along
\eqref{eq:af_reduced_theta_vector}-\eqref{eq:af_reduced_omegan_vector}, we get
\begin{align*}
\begin{split}
\dot E
&= \delta\boldsymbol\theta^\top
\boldsymbol L_{00}
\left(
\delta\boldsymbol\omega_0
-
\boldsymbol 1_{n-1}\delta\omega_n
\right)
\\
&\quad+
\delta\boldsymbol\omega_0^\top
\boldsymbol K_0^{-1}
\left(
-\delta\boldsymbol\omega_0
-
\boldsymbol K_0\boldsymbol L_{00}\delta\boldsymbol\theta
\right)
\\
&\quad+
\frac{1}{k_{\mathrm{P}_n}}
\delta\omega_n
\left(
-\delta\omega_n
-
k_{\mathrm{P}_n}\boldsymbol l_{n0}\delta\boldsymbol\theta
\right).
\end{split}
\end{align*}
Expanding the terms gives
\begin{align*}
\begin{split}
\dot E
&=
\delta\boldsymbol\theta^\top
\boldsymbol L_{00}
\delta\boldsymbol\omega_0
-
\delta\boldsymbol\theta^\top
\boldsymbol L_{00}
\boldsymbol 1_{n-1}\delta\omega_n
-
\delta\boldsymbol\omega_0^\top
\boldsymbol K_0^{-1}
\delta\boldsymbol\omega_0
\\&\quad-
\delta\boldsymbol\omega_0^\top
\boldsymbol L_{00}\delta\boldsymbol\theta
-
\frac{1}{k_{\mathrm{P}_n}}(\delta\omega_n)^2
-
\delta\omega_n\boldsymbol l_{n0}\delta\boldsymbol\theta.
\end{split}
\end{align*}
Since $\boldsymbol L_{00}=\boldsymbol L_{00}^\top$, the first and fourth terms
cancel. Using \eqref{eq:L00_l0n_identity}, we have
$\boldsymbol L_{00}\boldsymbol 1_{n-1}
=
-\boldsymbol l_{0n}.$ Therefore, $-\delta\boldsymbol\theta^\top
\boldsymbol L_{00}
\boldsymbol 1_{n-1}\delta\omega_n
=
\delta\boldsymbol\theta^\top
\boldsymbol l_{0n}\delta\omega_n =
\delta\omega_n\boldsymbol l_{0n}^\top\delta\boldsymbol\theta =
\delta\omega_n\boldsymbol l_{n0}\delta\boldsymbol\theta,$ where \eqref{eq:ln0_l0n_transpose} was used in the last equality. Hence this
term cancels the term
$-\delta\omega_n\boldsymbol l_{n0}\delta\boldsymbol\theta$. Thus
\begin{align}
\dot E
=
-
\delta\boldsymbol\omega_0^\top
\boldsymbol K_0^{-1}
\delta\boldsymbol\omega_0
-
\frac{1}{k_{\mathrm{P}_n}}(\delta\omega_n)^2
\le 0.
\label{eq:af_reduced_Vdot}
\end{align}
We now characterize the largest invariant set contained in $\{\dot E=0\}$.
From \eqref{eq:af_reduced_Vdot}, $\dot E=0$ implies $\delta\boldsymbol\omega_0=\boldsymbol 0_{n-1},\;
\delta\omega_n=0.$ On an invariant trajectory contained in this set, one must also have $\delta\dot{\boldsymbol\omega}_0=\boldsymbol 0_{n-1}.$ Using \eqref{eq:af_reduced_omega0_vector}, this gives $\boldsymbol 0_{n-1}
=
-\boldsymbol K_0\boldsymbol L_{00}\delta\boldsymbol\theta.$ Since $\boldsymbol K_0$ is nonsingular, $
\boldsymbol L_{00}\delta\boldsymbol\theta
=
\boldsymbol 0_{n-1}.$ Because $\boldsymbol L_{00}\succ0$, it follows that $\delta\boldsymbol\theta
=
\boldsymbol 0_{n-1}.$ Thus the largest invariant set in $\{\dot E=0\}$ is the origin. By LaSalle's invariance principle, the origin of the reduced angle-frequency
subsystem is asymptotically stable. Since
\eqref{eq:af_reduced_theta_vector}-\eqref{eq:af_reduced_omegan_vector} is a
linear finite-dimensional system, asymptotic stability is equivalent to
exponential stability. Therefore, there exist constants $M_{\rm af}>0$ and
$\lambda_{\rm af}>0$ such that \eqref{eq:af_reduced_exp_estimate} holds for all
$t\ge t_0$. This completes the proof.
\hfill $\square$

Unlike the voltage-subsystem conditions in Theorem \ref{thm:hierarchical_voltage_stability} or Corollary \ref{cor:decentralized_voltage_stability}, the angle-frequency stability result requires no additional bounds on the active-power droop gains or filter time constants. Under the stated assumptions, positivity alone,
$k_{\mathrm{P}_i}>0$ and $\tau_{\mathrm{P}_i}>0$ for all $i\in\mathcal V$, is sufficient for exponential stability of the decoupled angle-frequency
subsystem.

\section{Numerical Simulations}\label{sec:simulation}

\subsection{Synthetic Inverter Network}

\begin{figure}
    \centering    \includegraphics[width=0.9\linewidth]{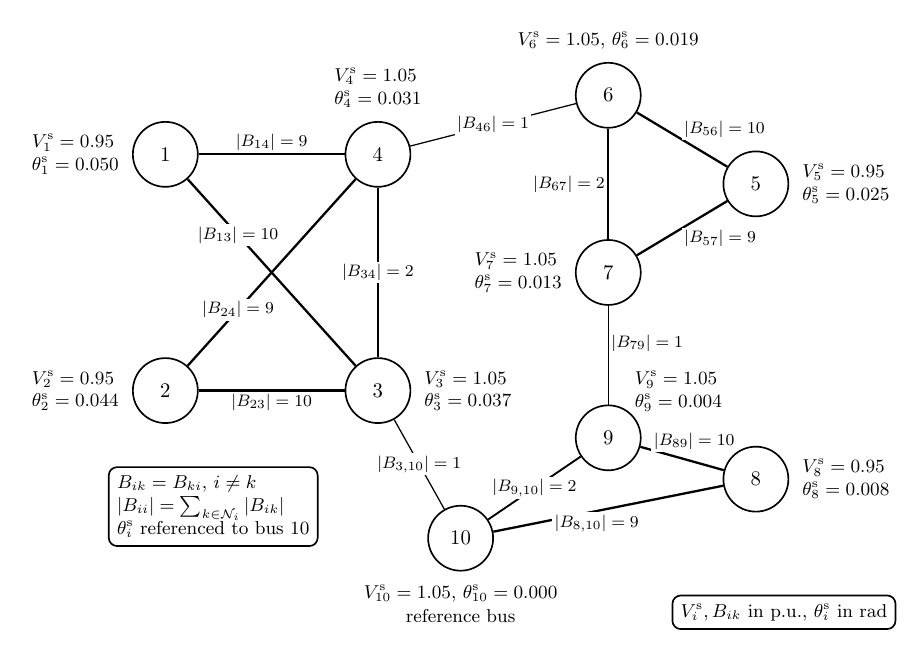}
    \caption{Synthetic 10-bus inverter network.}
    \label{fig:artificial}
\end{figure}

\begin{table*}[t]
\centering
\caption{Critical $k_{\rm Q}$ values for eigenvalue tests and cyclic small-gain certificates for the synthetic 10-bus network considered in Fig. \ref{fig:artificial} under different normalization exponents $x$ in the power-law family.}
\label{tab:kq_critical_values_x_sweep}
\setlength{\tabcolsep}{4pt} 
\renewcommand{\arraystretch}{0.7} 
\begin{tabular}{lccccccccc}
\toprule
 & \multicolumn{9}{c}{Critical value of \(k_{\rm Q}\) in p.u.} \\
\midrule
Volt. subsys. eig.
& \multicolumn{9}{c}{\(6.37\)} \\
Full sys. eig.
& \multicolumn{9}{c}{\(6.27\)} \\
\midrule
norm. expon.
& \(x=0\) & \(x=0.25\) & \(x=0.5\) & \(x=0.75\) & \(x=0.8\) & \(x=0.85\) & \(x=0.9\) & \(x=0.95\) & \(x=1\) \\
\midrule
Decentral.
& \(0.05\) & \(0.11\) & \(0.25\) & \(0.46\) & \(0.48\) & \(0.49\) & \(0.50\) & \(0.51\) & \(0.52\) \\
3-clus.
& \(0.11\) & \(0.17\) & \(0.31\) & \(0.68\) & \(0.84\) & \(1.05\) & \(1.39\) & \(1.96\) & \(3.13\) \\
2-clus.
& \(0.11\) & \(0.17\) & \(0.31\) & \(0.68\) & \(0.84\) & \(1.05\) & \(1.39\) & \(1.96\) & \(3.13\) \\
1-clus. (central.)
& \(0.11\) & \(0.17\) & \(0.31\) & \(0.68\) & \(0.84\) & \(1.05\) & \(1.39\) & \(1.96\) & \(3.13\) \\
\bottomrule
\end{tabular}
\end{table*}

\begin{figure}
    \centering
    \includegraphics[width=0.9\linewidth]{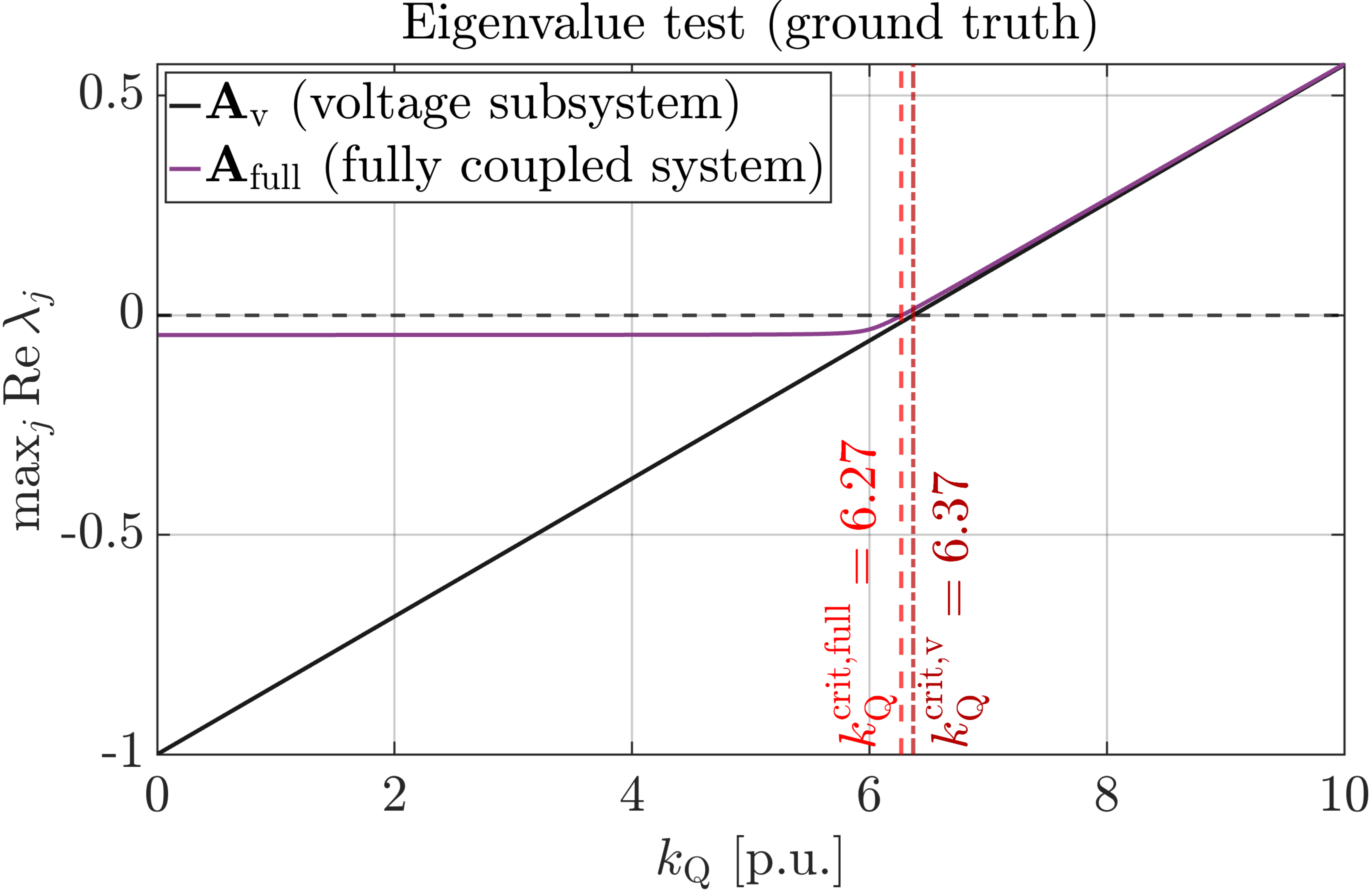}
      \caption{Synthetic 10-bus network: eigenvalue ground truth. Maximum real part of the spectrum vs. the
  homogeneous $k_{\mathrm{Q}}$, for the voltage subsystem
  $\boldsymbol A_{\rm v}$ and for the fully coupled linearization.}
    \label{fig:eigenvalue_artificial}
\end{figure}

\begin{figure}
    \centering
    \includegraphics[width=0.9\linewidth]{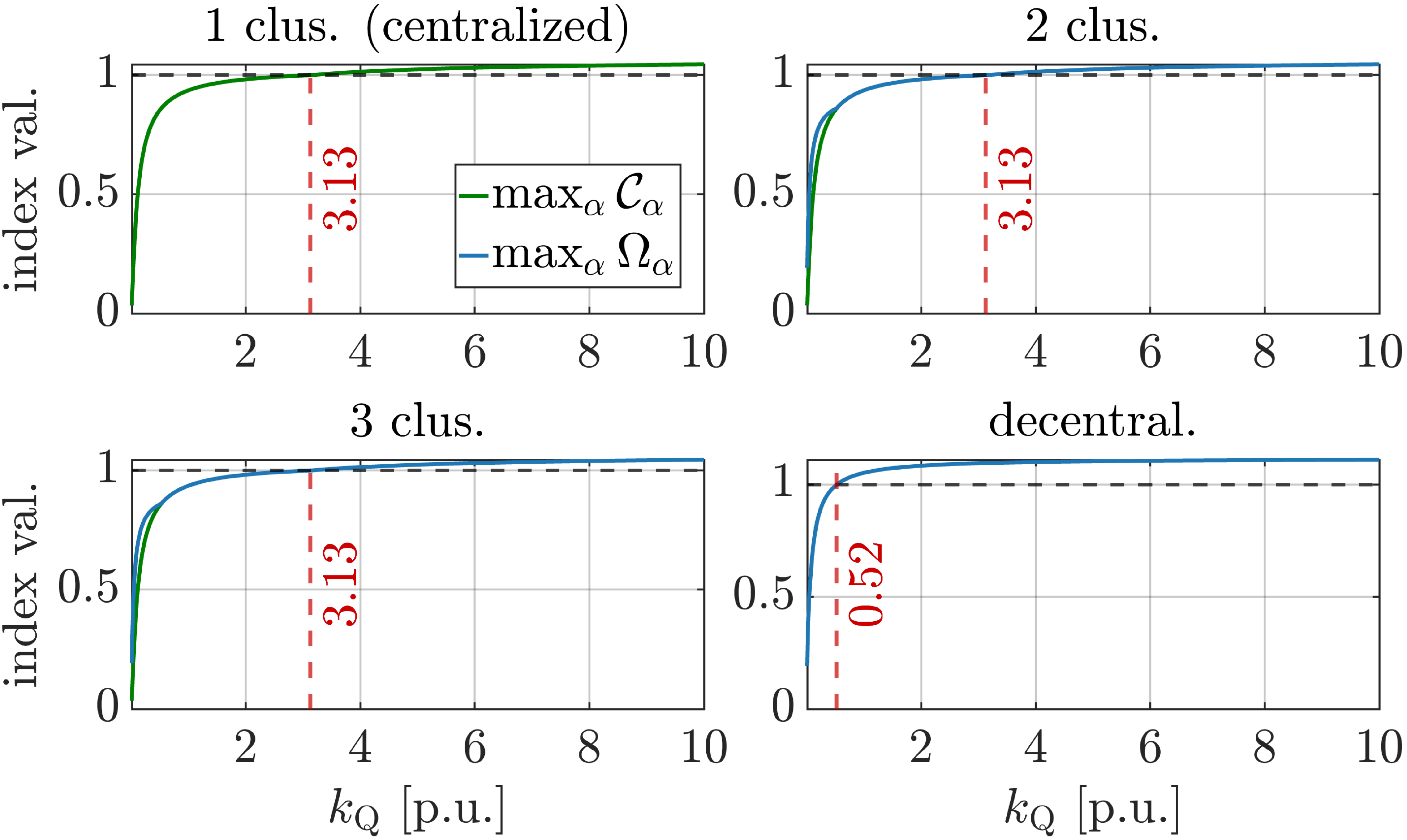}
\caption{Synthetic 10-bus network, proportional normalization ($x=1$): cluster- and node-level stability indices vs. the homogeneous  $k_{\mathrm{Q}}$.
  Each panel corresponds to one partition---the single cluster (centralized), two clusters, three clusters, and the
  singleton partition (decentralized).}
    \label{fig:indices_KQ_prop_artificial}
\end{figure}

\begin{figure}
    \centering
    \includegraphics[width=0.9\linewidth]{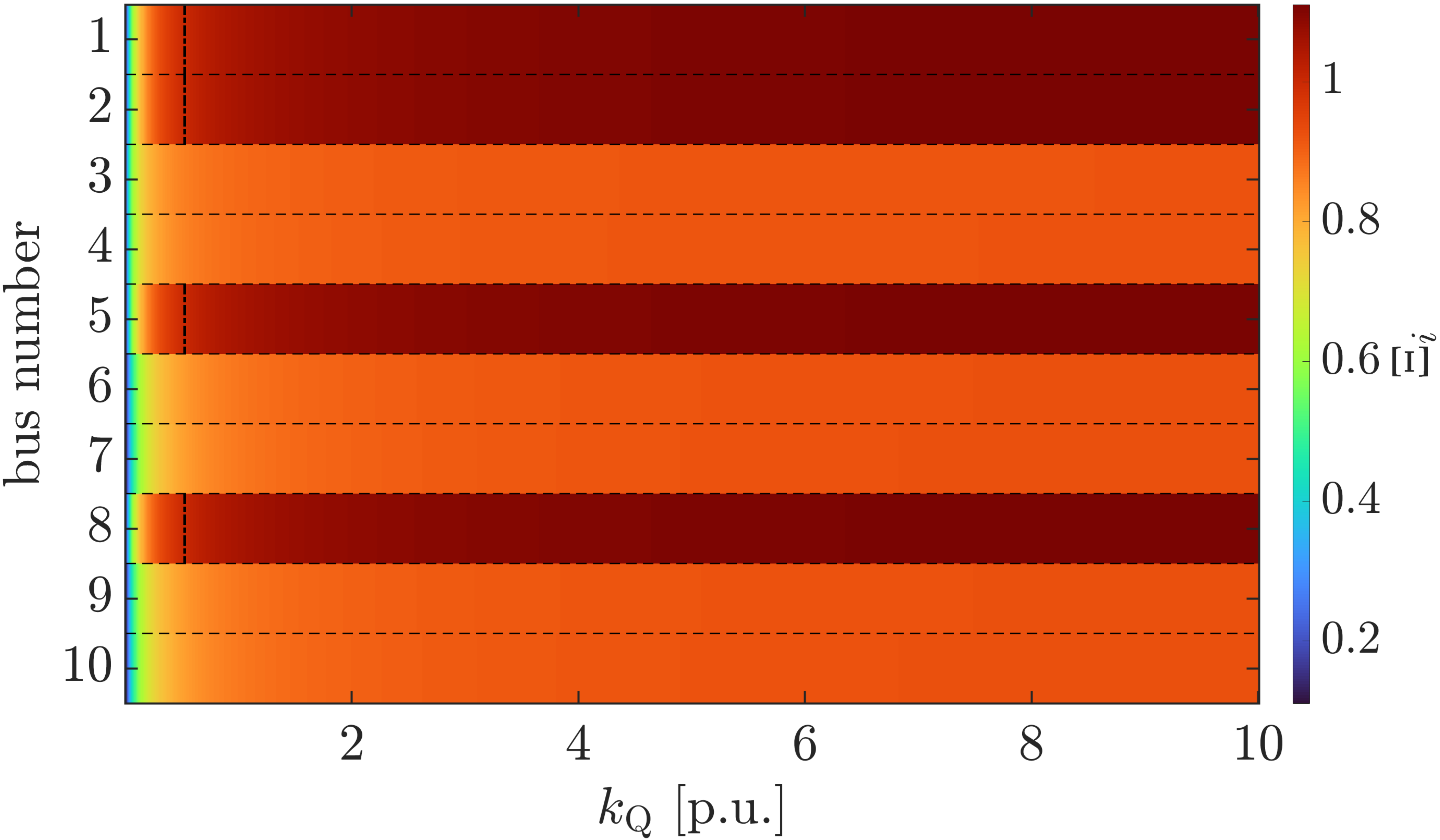}
  \caption{Synthetic 10-bus network, proportional normalization ($x=1$): node-level stability index
  $\Xi_i$
  (Definition \ref{def:node_stability_index}) as a function of $k_{\mathrm{Q}}$.}
    \label{fig:heat_map_Xi_prop_artificial}
\end{figure}

We first evaluate the proposed certificates on the synthetic 10-bus 
network shown in Fig. \ref{fig:artificial}. In addition to the singleton  and single-cluster  partitions, we consider the 2-cluster partition $\big\{\{1,2,3,4\},\{5,6,7,8,9,10\}\big\}$ and the 3-cluster partition $\big\{\{1,2,3,4\},\{5,6,7\},\{8,9,10\}\big\}.$ The filter time constants are set to $\tau_{\mathrm P_i}=\tau_{\mathrm Q_i}=1$s, the active-power/frequency droop gains are set to $k_{\mathrm P_i}=0.05$ p.u., and the homogeneous reactive-power/voltage droop gain $k_{\mathrm Q_i}=k_{\mathrm Q}$ is swept  over $k_{\mathrm Q}\in[0.01,10]$ p.u. with a step size of $0.01$ p.u.

We compare the certificates proposed in Theorem \ref{thm:hierarchical_voltage_stability} and Corollary \ref{cor:decentralized_voltage_stability} against two eigenvalue
ground-truth tests: one based on the
voltage subsystem $\boldsymbol A_{\rm v}$ of
\eqref{eq:linearized_volt_dynamics}-\eqref{eq:Di_def}, and one based on the fully coupled linearization
\eqref{eq:lin_theta_i}-\eqref{eq:lin_V_n}, with the
latter retaining the angle-voltage cross-coupling that
Assumption \ref{ass:small_angle_decoupling} discards. A configuration is declared stable by an eigenvalue test when the corresponding maximum real part of the spectrum is negative. For each certificate, $\lambda_i>0$ is checked at every tested point; otherwise the certificate is declared to fail. Subject to this requirement, the certificate holds when the relevant stability index is below unity. For each certificate, we report results for different exponents $x\in[0,1]$ in the power-law normalization, where $x=0$ gives the uniform
normalization and $x=1$ gives the proportional normalization.

The critical $k_{\mathrm Q}$ values obtained from the tests are summarized in
Table \ref{tab:kq_critical_values_x_sweep}. Over the sweep $k_{\mathrm Q}\in[0.01,10]$ p.u., the voltage subsystem remains Hurwitz
up to $k_{\mathrm Q}=6.37$ p.u., while the fully coupled linearization loses
stability at $k_{\mathrm Q}=6.27$ p.u. Thus, for this example, the decoupled
voltage subsystem closely approximates the full small-signal stability
boundary. This is also evident in Fig. \ref{fig:eigenvalue_artificial}, where the maximum real parts of the voltage-subsystem and full-system eigenvalues cross zero at nearby values of $k_{\mathrm Q}$.

The small-gain certificates are more conservative than the eigenvalue tests, but Table \ref{tab:kq_critical_values_x_sweep} shows that the degree of conservatism depends on both the partition and the normalization. Under uniform normalization, $x=0$, the decentralized certificate remains
valid only up to $k_{\mathrm Q}=0.05$ p.u., whereas the 2-cluster, 3-cluster,
and centralized cyclic certificates remain valid up to
$k_{\mathrm Q}=0.11$ p.u. As $x$ increases, the certified range expands for
all certificates, with a larger improvement for the non-singleton
partitions. Under proportional normalization, $x=1$, the decentralized
certificate fails at $k_{\mathrm Q}=0.52$ p.u., while the 2-cluster, 3-cluster,
and centralized cycle/path certificates all certify stability up to
$k_{\mathrm Q}=3.13$ p.u. Figure \ref{fig:indices_KQ_prop_artificial} shows how these critical values arise from the evolution of the stability indices under the proportional normalization $x=1$. The node-level heat map in
Fig. \ref{fig:heat_map_Xi_prop_artificial} localizes the decentralized
limitation, with buses $1,2,5$, and $8$ being the nodes whose indices $\Xi_i$ reach unity first.

\subsection{New England IEEE 39-Bus Test System}

\begin{table*}[t]
\centering
\caption{Critical \(k_{\rm Q}\) values for eigenvalue tests and cyclic small-gain certificates for the Kron-reduced IEEE 39-bus system under different normalization exponents \(x\) in the power-law family.}
\label{tab:ieee39_kq_critical_values_x_sweep}
\setlength{\tabcolsep}{4pt} 
\renewcommand{\arraystretch}{0.7} 
\begin{tabular}{lcccccccc}
\toprule
 & \multicolumn{8}{c}{Critical value of \(k_{\rm Q}\) in p.u.} \\
\midrule
Volt. subsys. eig.
& \multicolumn{8}{c}{\(0.404\)} \\
Full sys. eig.
& \multicolumn{8}{c}{\(0.335\)} \\
\midrule
norm. expon.
& \(x=0\) & \(x=0.25\) & \(x=0.5\) & \(x=0.6\) & \(x=0.7\) & \(x=0.8\) & \(x=0.9\) & \(x=1\) \\
\midrule
Decentral.
& \(<0.01\) & \(0.016\) & \(0.033\) & \(0.046\) & \(0.066\) & \(0.088\) & \(0.107\) & \(0.132\) \\
3-clus.
& \(<0.01\) & \(0.017\) & \(0.035\) & \(0.049\) & \(0.070\) & \(0.088\) & \(0.107\) & \(0.132\) \\
2-clus.
& \(<0.01\) & \(0.017\) & \(0.035\) & \(0.049\) & \(0.070\) & \(0.088\) & \(0.107\) & \(0.132\) \\
1-clus. (central.)
& \(<0.01\) & \(0.017\) & \(0.035\) & \(0.049\) & \(0.070\) & \(0.109\) & \(0.147\) & \(0.190\) \\
\bottomrule
\end{tabular}
\end{table*}

\begin{figure}
    \centering
    \includegraphics[width=0.9\linewidth]{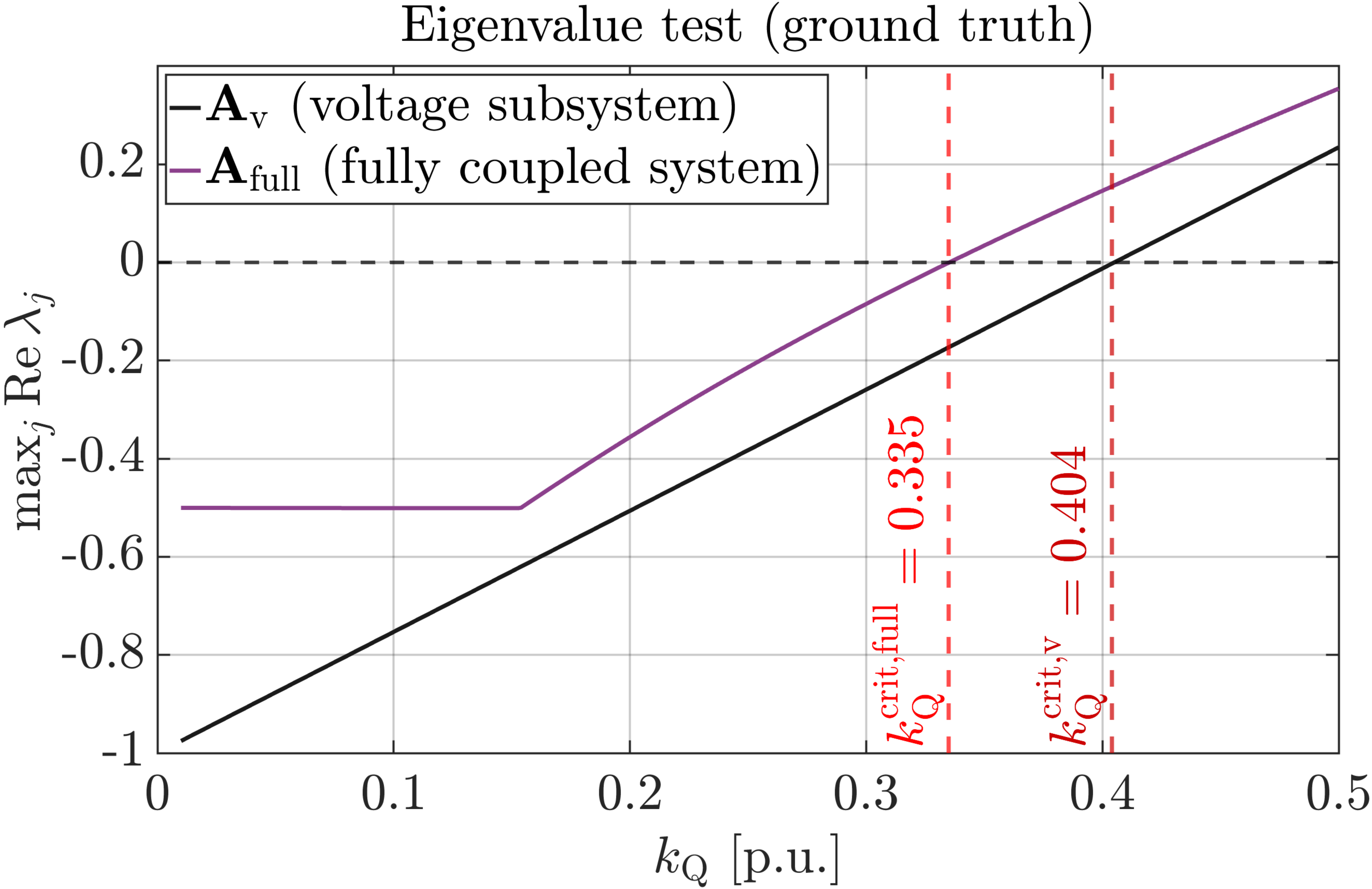}
          \caption{Kron-reduced IEEE 39-bus network: eigenvalue ground truth. Maximum real part of the spectrum vs. the
  homogeneous $k_{\mathrm{Q}}$, for the voltage subsystem
  $\boldsymbol A_{\rm v}$ and for the fully coupled linearization.}
    \label{fig:eigenvalue_ieee39}
\end{figure}

\begin{figure}
    \centering
    \includegraphics[width=0.9\linewidth]{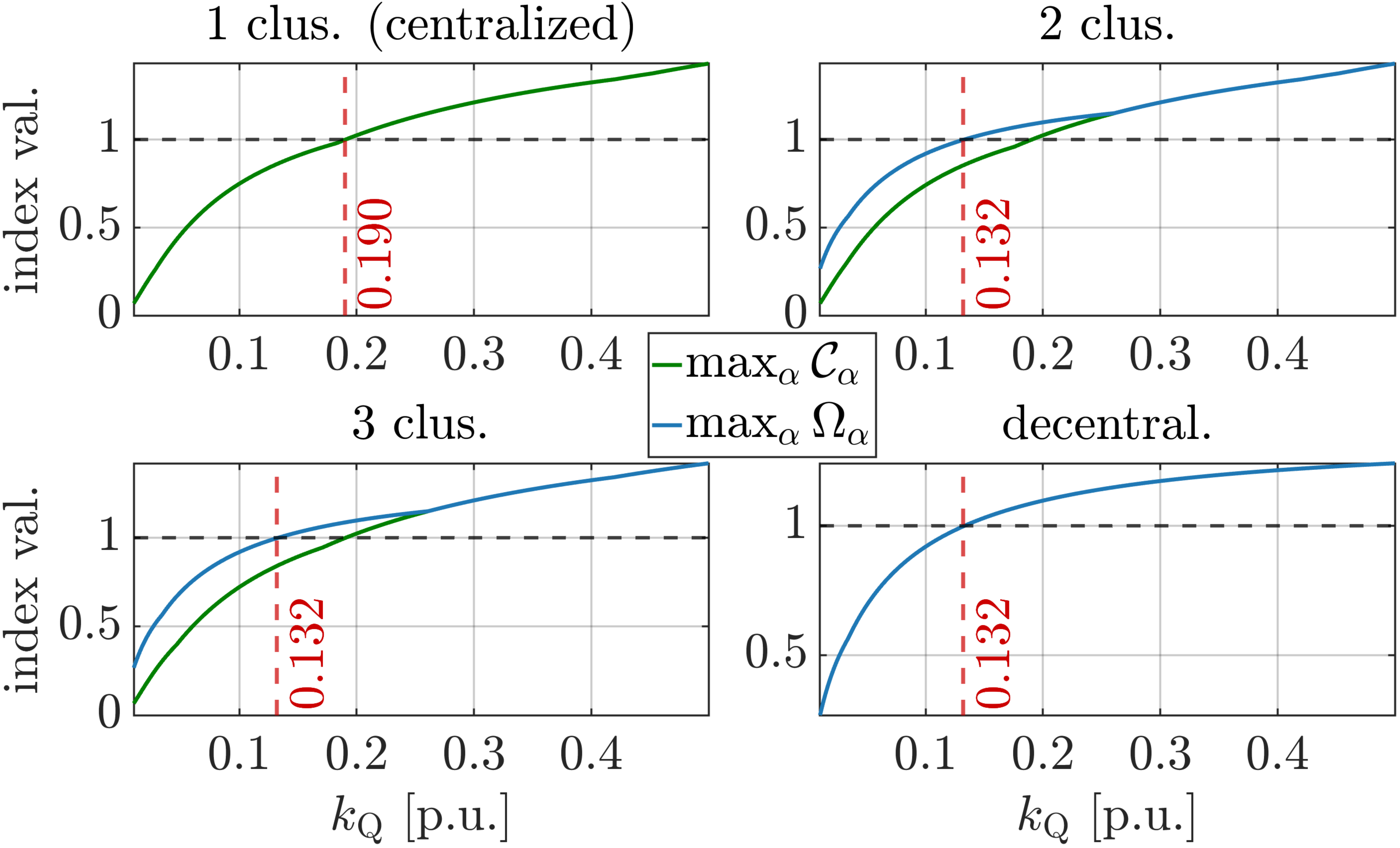}
    \caption{Kron-reduced IEEE 39-bus network, proportional normalization ($x=1$): cluster- and node-level stability indices vs. the homogeneous $k_{\mathrm{Q}}$.}
    \label{fig:indices_KQ_prop_ieee39}
\end{figure}

\begin{figure}
    \centering
    \includegraphics[width=0.9\linewidth]{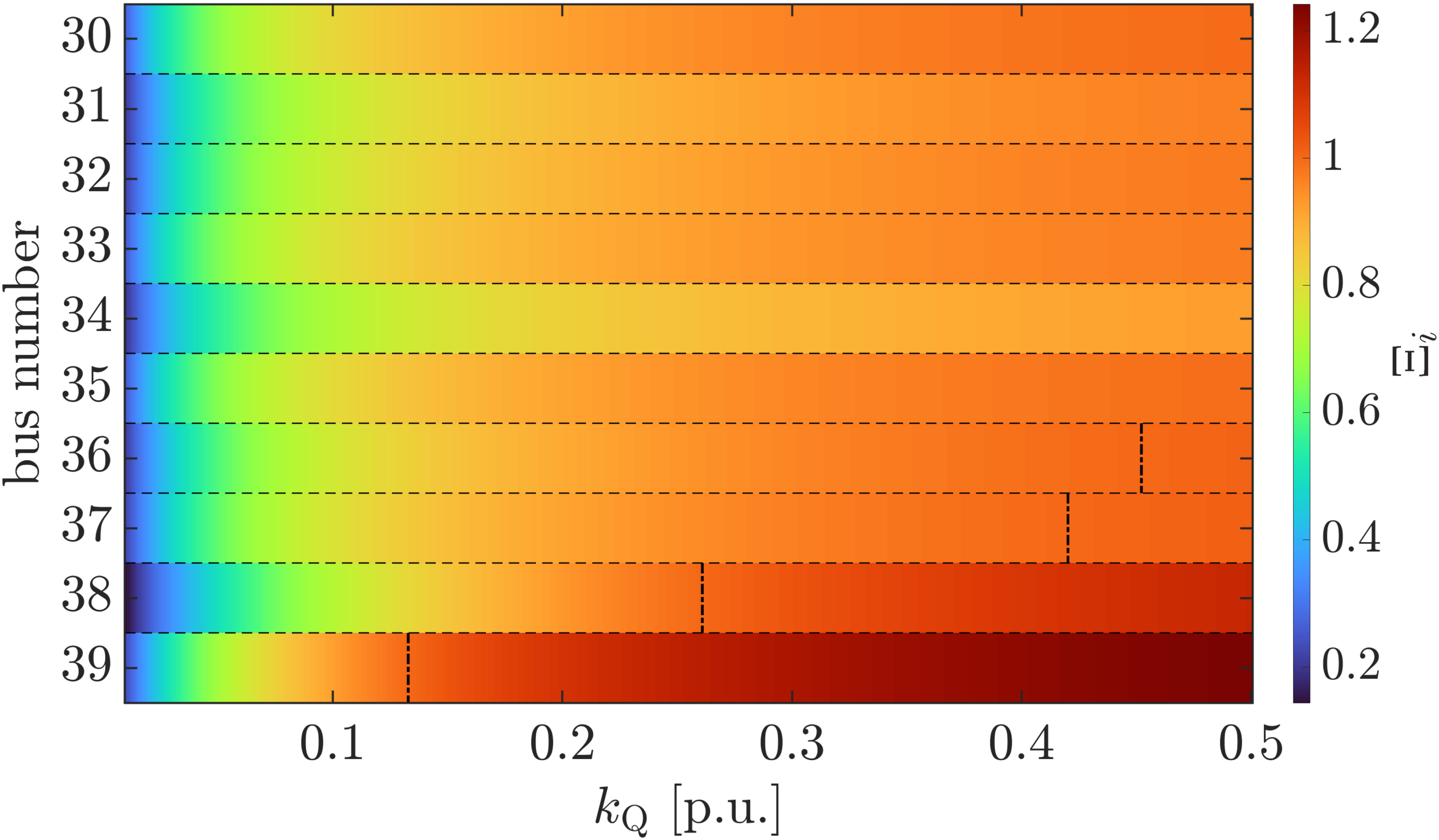}
      \caption{Kron-reduced IEEE 39-bus network, proportional normalization ($x=1$): node-level stability index
  $\Xi_i$ as a function of $k_{\mathrm{Q}}$.}
    \label{fig:heat_map_Xi_prop_ieee39}
\end{figure}

\begin{figure}
    \centering
    \includegraphics[width=0.9\linewidth]{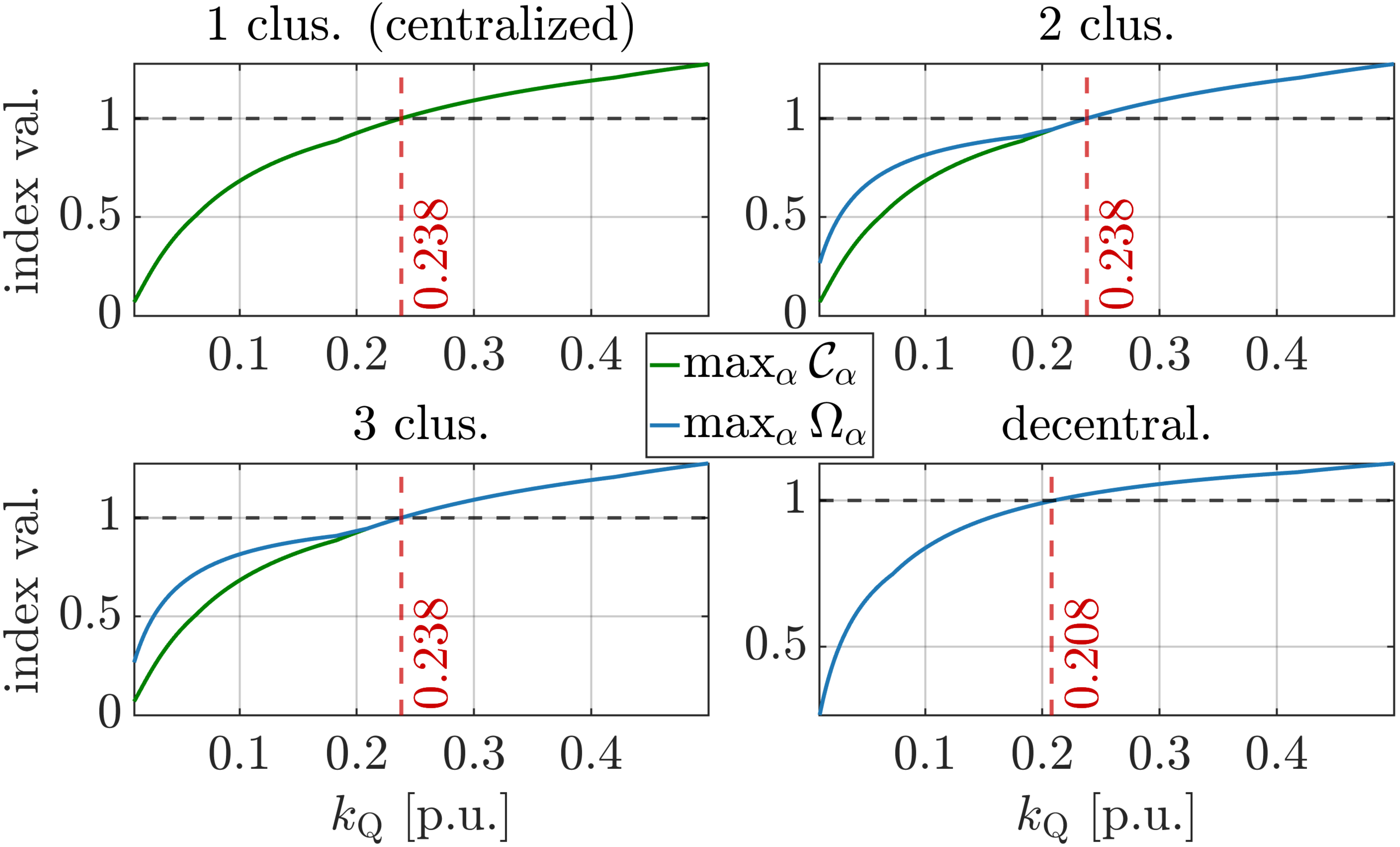}
    \caption{Sparsified Kron-reduced IEEE 39-bus network, proportional normalization ($x=1$): cluster- and node-level indices vs. the homogeneous $k_{\mathrm{Q}}$.}
    \label{fig:indices_KQ_sparsified_prop_ieee39}
\end{figure}

We further evaluate our framework on the New England IEEE
$39$-bus test system. The network is reduced by Kron elimination to its
$n=10$ generator buses (buses $30$-$39$), with bus $39$ taken as the reference. The equilibrium
voltages $\boldsymbol V^{\rm s}$ and the equilibrium angles
$\boldsymbol\theta^{\rm s}$ are obtained from a converged power-flow
solution using MATPOWER \cite{zimmerman2010matpower}. The Kron-reduced graph is dense: each bus is connected to all nine others. In addition to the singleton  and the single-cluster  partition, we consider the 2-cluster partition $\big\{\{30,31,32,33,34\},\{35,36,37,38,39\}\big\}$ and the 3-cluster partition $\big\{\{30,31,32\},\{33,34,35,36\},\{37,38,39\}\big\}$. The filter time constants are set to $\tau_{\mathrm P_i}=\tau_{\mathrm Q_i}=1$s, the active-power/frequency droop gains are set to $k_{\mathrm P_i}=0.05$ p.u., and the homogeneous reactive-power/voltage droop gain $k_{\mathrm Q_i}=k_{\mathrm Q}$ is swept  over $k_{\mathrm Q}\in[0.01,0.5]$ p.u. with a step size of $0.001$ p.u.

We evaluate the same two eigenvalue tests for the decoupled voltage subsystem and the fully coupled linearized system, and the family of certificates used for the synthetic network. The critical $k_{\mathrm Q}$ values are summarized in Table \ref{tab:ieee39_kq_critical_values_x_sweep}. The voltage
subsystem remains Hurwitz up to $k_{\mathrm Q}=0.404$ p.u., whereas the fully coupled linearization loses stability at $k_{\mathrm Q}=0.335$ p.u.; these stability boundaries are also visible in Fig. \ref{fig:eigenvalue_ieee39} as the respective zero crossings of the maximum real parts of the eigenvalues.  

The certificate results again depend strongly on the normalization exponent. Under uniform normalization, $x=0$, all four cyclic small-gain certificates lose validity below $k_{\mathrm Q}=0.01$ p.u. As $x$ increases, all certified ranges expand. Under proportional normalization, $x=1$, the decentralized, 2-cluster, and 3-cluster certificates all fail at $k_{\mathrm Q}=0.132$ p.u., whereas the centralized cyclic certificate remains valid up to $k_{\mathrm Q}=0.190$ p.u. The reason for this coincidence is visible in Figs. \ref{fig:indices_KQ_prop_ieee39} and \ref{fig:heat_map_Xi_prop_ieee39}. The heat map identifies bus $39$ as the first node whose decentralized index $\Xi_i$ reaches unity. Because the Kron-reduced graph is dense, bus $39$ has neighbors outside its cluster under both nontrivial partitions. Under proportional normalization, all incoming gains to bus $39$ are equalized; therefore, when $\Xi_{39}$ reaches unity,
at least one equally large incoming gain is necessarily an inter-cluster channel. The corresponding inter-cluster index must then reach unity as well, which is exactly what occurs for the 2-cluster and 3-cluster partitions in Fig. \ref{fig:indices_KQ_prop_ieee39}. The single-cluster partition is not subject to this limitation because it has no inter-cluster condition, and it therefore remains certified until its internal cycle index reaches unity at $k_{\mathrm Q}=0.190$ p.u. This  behavior is consistent with the mechanism discussed in Section \ref{sec:comparison_partition_normalization} and contrasts
with the synthetic example, where the limiting buses $1,2,5$, and $8$ have all of their neighbors contained within the same cluster under both nontrivial partitions. Their limiting incoming channels are therefore internalized, allowing the 2-cluster and 3-cluster certificates to outperform the decentralized test. Together, the two examples show that the benefit of clustering depends on whether the selected partition internalizes the channels that limit the decentralized certificate.

To demonstrate the same mechanism within the IEEE 39-bus setting, we consider a modified Kron-reduced network in which bus $39$, identified above as the
limiting node, is fully internalized by the nontrivial partitions. We remove the two weak links connecting bus $39$ to buses $34$ and $36$,
and consider the 2-cluster partition
$\big\{\{30,31,32,33,35,37,38,39\},\{34,36\}\big\}$
and the 3-cluster partition
$\big\{\{30,31,32,33,35,37,38,39\},\{34\},\{36\}\big\}$. Under both partitions, bus $39$ and all of its neighbors lie in the same cluster. As shown in Fig. \ref{fig:indices_KQ_sparsified_prop_ieee39}, the 2-cluster and 3-cluster certificates consequently remain valid up to $k_{\mathrm Q}=0.238$ p.u., exceeding the decentralized boundary $k_{\mathrm Q}=0.208$ p.u.

\section{Conclusions}\label{sec:conclusions}

We developed a time-domain framework for small-signal stability certification of lossless grid-forming (GFM) inverter networks with selectable clustering resolution. Starting from a droop-controlled network linearized around a synchronized operating point, a small-angle approximation separates the model into angle-frequency and voltage subsystems. The angle-frequency subsystem is certified by an energy argument using the symmetric weighted-Laplacian structure of the network. For the voltage subsystem, a cyclic small-gain argument yields a family of sufficient stability certificates ranging from the singleton decentralized case to arbitrary cluster partitions and the single-cluster centralized case.

The main structural feature of the voltage certificate is its cluster separability. Expressing voltage-error propagation through
products of node-to-node gains allows the global cyclic small-gain condition to be decomposed into intra-cluster cycle conditions and inter-cluster path conditions. Each cluster therefore verifies its own certificate using intra-cluster data and one-hop boundary information, without assembling the global voltage state matrix. The singleton and single-cluster limits recover the decentralized and centralized cyclic certificates, respectively, while intermediate partitions provide a certification resolution matched to the organization of the network. For any fixed normalization used in the node-to-node gain construction, decentralized certification implies cluster-based certification for every partition, but the converse need not hold.

The resulting indices also expose why a certificate succeeds or fails. The node-level index $\Xi_i$ identifies the limiting incoming channel at an individual inverter, the intra-cluster index $\mathcal C_\alpha$ identifies the most restrictive internal feedback loop, and the inter-cluster index $\Omega_\alpha$ identifies the strongest incoming influence from neighboring clusters. This interpretation also reveals the distinct roles of gain normalization and partitioning. Uniform normalization produces a degree-scaled worst-neighbor effect, whereas proportional normalization equalizes the incoming gains and minimizes the decentralized node-level index at each receiving node over all admissible normalization weights. Under proportional normalization, however, a cluster certificate can improve on the decentralized certificate only if \textit{all} neighbors of the limiting node belong to the same cluster as that node; otherwise, when its decentralized
index reaches unity, a direct inter-cluster path also reaches unity, causing the cluster certificate to fail at the same point. The synthetic and Kron-reduced IEEE 39-bus studies illustrate these mechanisms and, through comparison with eigenvalue-based ground truth, quantify both the performance of the proposed certificates and their remaining conservatism.

\section{Appendix (Proof of Theorem \ref{thm:hierarchical_voltage_stability})}\label{sec:proof}

In this section, we prove Theorem \ref{thm:hierarchical_voltage_stability}. 
Corollary \ref{cor:decentralized_voltage_stability} follows directly from 
Theorem \ref{thm:hierarchical_voltage_stability}, as explained in 
Section \ref{sec:volt_stab_decentralized}, and is therefore not discussed here.    

\subsection{Auxiliary definitions and results}

The following definitions and auxiliary results, adapted from
\cite{karafyllis2011vector}, formalize the MAX-operator framework used in
the cyclic small-gain arguments in the proofs that follow.

\begin{defn}[MAX operator]
For $\boldsymbol u,\boldsymbol y\in\mathbb R^n$, define
\begin{align*}
\operatorname{MAX}\{\boldsymbol u,\boldsymbol y\}
:=
\boldsymbol z,
\end{align*}
where $\boldsymbol z\in\mathbb R^n$ is defined componentwise by 
\begin{align*}
z_i:=\max\{u_i,y_i\},
\qquad i=1,\ldots,n.
\end{align*}
More generally, for $\boldsymbol u_1,\ldots,\boldsymbol u_m\in\mathbb R^n$,
define
\begin{align*}
\operatorname{MAX}\{\boldsymbol u_1,\ldots,\boldsymbol u_m\}
:=
\boldsymbol z,
\end{align*}
where $\boldsymbol z\in\mathbb R^n$ is defined componentwise by
\begin{align*}
z_i:=\max\{u_{1i},\ldots,u_{mi}\},
\qquad i=1,\ldots,n.
\end{align*}
\end{defn}

\begin{defn}[MAX-preserving map]
A mapping $\boldsymbol\Gamma:\mathbb R_+^n\to\mathbb R_+^n$ is called
MAX-preserving if it is nondecreasing and satisfies
\begin{align*}
\boldsymbol\Gamma\bigl(\operatorname{MAX}\{\boldsymbol u,\boldsymbol y\}\bigr)
=
\operatorname{MAX}
\bigl\{
\boldsymbol\Gamma(\boldsymbol u),
\boldsymbol\Gamma(\boldsymbol y)
\bigr\},
\qquad
\forall \boldsymbol u,\boldsymbol y\in\mathbb R_+^n.
\end{align*}
\end{defn}

\begin{lem}[Representation of MAX-preserving maps]
\label{prop:Gamma_representation}
A mapping $\boldsymbol\Gamma:\mathbb R_+^n\to\mathbb R_+^n$, with
\begin{align*}
\boldsymbol\Gamma(\boldsymbol z)
=
\bigl(\Gamma_1(\boldsymbol z),\ldots,\Gamma_n(\boldsymbol z)\bigr)^\top,
\end{align*}
is MAX-preserving if and only if there exist nondecreasing scalar functions
$\gamma_{ij}:\mathbb R_+\to\mathbb R_+$ such that
\begin{align}
\Gamma_i(\boldsymbol z)
=
\max_{j=1,\ldots,n}\gamma_{ij}(z_j),
\qquad
i=1,\ldots,n.
\label{eq:Gamma_representation}
\end{align}
\end{lem}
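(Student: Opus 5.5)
We prove the two directions of Lemma~\ref{prop:Gamma_representation} separately. The representation gives the Lemma's own $\gamma_{ij}$, which we construct from $\boldsymbol\Gamma$ by restricting it to coordinate axes. For $z\in\mathbb R_+$ and $j\in\{1,\ldots,n\}$, let $\boldsymbol e_j(z)\in\mathbb R_+^n$ denote the vector with $j$-th entry $z$ and all other entries zero. We then define
\begin{align*}
\gamma_{ij}(z):=\Gamma_i\bigl(\boldsymbol e_j(z)\bigr),\qquad i,j=1,\ldots,n .
\end{align*}

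\emph{Sufficiency.} Assume \eqref{eq:Gamma_representation} holds with nondecreasing $\gamma_{ij}$. If $\boldsymbol u\le\boldsymbol y$ componentwise, then $\gamma_{ij}(u_j)\le\gamma_{ij}(y_j)$ for every $j$. Taking the maximum over $j$ shows that $\boldsymbol\Gamma$ is nondecreasing. For MAX preservation, monotonicity of each $\gamma_{ij}$ gives
\begin{align*}
\gamma_{ij}(\max\{u_j,y_j\})=\max\{\gamma_{ij}(u_j),\gamma_{ij}(y_j)\}.
\end{align*}
Taking the maximum over $j$ and exchanging the order of the two maxima yields
\begin{align*}
\Gamma_i(\operatorname{MAX}\{\boldsymbol u,\boldsymbol y\})=\max\{\Gamma_i(\boldsymbol u),\Gamma_i(\boldsymbol y)\}.
\end{align*}
Values remain in $\mathbb R_+$ because each $\gamma_{ij}$ maps into $\mathbb R_+$.

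\emph{Necessity.} Assume $\boldsymbol\Gamma$ is MAX-preserving and define $\gamma_{ij}$ as above. Each $\gamma_{ij}$ is nondecreasing and $\mathbb R_+$-valued, since $\boldsymbol\Gamma$ is nondecreasing and $z\mapsto\boldsymbol e_j(z)$ is monotone. Because all entries are nonnegative, every $\boldsymbol z\in\mathbb R_+^n$ satisfies
\begin{align*}
\boldsymbol z=\operatorname{MAX}\{\boldsymbol e_1(z_1),\ldots,\boldsymbol e_n(z_n)\}.
\end{align*}
The defining identity is stated for two arguments, so we extend it to $n$ arguments by induction. The inductive step uses
\begin{align*}
\operatorname{MAX}\{\boldsymbol u_1,\ldots,\boldsymbol u_m\}=\operatorname{MAX}\{\operatorname{MAX}\{\boldsymbol u_1,\ldots,\boldsymbol u_{m-1}\},\boldsymbol u_m\},
\end{align*}
which follows from associativity of the componentwise maximum. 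Applying the extended identity gives
\begin{align*}
\Gamma_i(\boldsymbol z)=\max_j\Gamma_i(\boldsymbol e_j(z_j))=\max_j\gamma_{ij}(z_j),
\end{align*}
which is \eqref{eq:Gamma_representation}.

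The only point needing care is this decomposition of $\boldsymbol z$ as a MAX of axis vectors. It relies on $\boldsymbol z\in\mathbb R_+^n$: off-axis entries are $0$, which never exceeds $z_j\ge 0$. This is why the Lemma is stated on the nonnegative orthant.
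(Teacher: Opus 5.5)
Your proof is correct and complete. Both directions check out, including the induction that extends the two-argument MAX identity to $n$ arguments and the decomposition $\boldsymbol z=\operatorname{MAX}\{\boldsymbol e_1(z_1),\ldots,\boldsymbol e_n(z_n)\}$, which is exactly where nonnegativity is needed. The paper does not prove this lemma itself but imports it from \cite{karafyllis2011vector}, and your axis-restriction construction $\gamma_{ij}(s)=\Gamma_i(\boldsymbol e_j(s))$ is the standard argument for this representation result.
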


\begin{lem}[MAX-bound under cyclic small-gain]
\label{lm: bound_with_initial}
Let $\boldsymbol\Gamma:\mathbb R_+^n\to\mathbb R_+^n$ be a continuous
MAX-preserving map with $\boldsymbol\Gamma(\boldsymbol 0_n)=\boldsymbol 0_n$.
Suppose that there exist nondecreasing scalar functions $\gamma_{ij}:\mathbb{R}_+\rightarrow \mathbb{R}_+$ such that
\begin{align*}
\Gamma_i(\boldsymbol z)
=
\max_{j=1,\ldots,n}\gamma_{ij}(z_j),
\qquad
i=1,\ldots,n.
\end{align*}
Assume that the cyclic small-gain conditions hold:
\begin{align*}
\gamma_{ii}(z)<z,
\qquad
\forall z>0,
\qquad
i=1,\ldots,n,
\end{align*}
and, for each $r=2,\ldots,n$,
\begin{align*}
(\gamma_{i_1i_2}\circ\gamma_{i_2i_3}\circ\cdots\circ\gamma_{i_ri_1})(s)<s,
\qquad
\forall s>0,
\end{align*}
for all pairwise distinct indices $i_1,\ldots,i_r\in\{1,\ldots,n\}$, where $\circ$ denotes function
composition.
Define $\boldsymbol\Psi:\mathbb R_+^n\to\mathbb R_+^n$ by
\begin{align*}
\boldsymbol\Psi(\boldsymbol z)
:=
\operatorname{MAX}
\left\{
\boldsymbol z,\,
\boldsymbol\Gamma(\boldsymbol z),\,
\boldsymbol\Gamma^{(2)}(\boldsymbol z),\,
\ldots,\,
\boldsymbol\Gamma^{(n-1)}(\boldsymbol z)
\right\},
\end{align*}
where $\boldsymbol\Gamma^{(r)}$ denotes the $r$-fold composition of $\boldsymbol\Gamma$ with itself. If $\boldsymbol a,\boldsymbol z\in\mathbb R_+^n$ satisfy
\begin{align*}
\boldsymbol z
\leq
\operatorname{MAX}\{\boldsymbol a,\boldsymbol\Gamma(\boldsymbol z)\},
\end{align*}
then
\begin{align*}
\boldsymbol z\leq\boldsymbol\Psi(\boldsymbol a).
\end{align*}
\end{lem}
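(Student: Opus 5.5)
We argue componentwise and by contradiction, tracing a chain of dominating gains backwards through the network. The chain either terminates at an ``input'' term $a_j$, which is bounded by a component of $\boldsymbol\Psi(\boldsymbol a)$, or it revisits a node, in which case the cyclic small-gain conditions force the corresponding component of $\boldsymbol z$ to vanish.

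\emph{Preliminary facts.} Since $\boldsymbol\Gamma(\boldsymbol 0_n)=\boldsymbol 0_n$ and $\Gamma_i(\boldsymbol 0_n)=\max_j\gamma_{ij}(0)$, we have $\gamma_{ij}(0)=0$ for all $i,j$. Iterating the representation of Lemma~\ref{prop:Gamma_representation} and using that each $\gamma_{ij}$ is nondecreasing gives, for every $k\ge1$,
\begin{align*}
\Gamma^{(k)}_i(\boldsymbol a)=\max_{j_1,\ldots,j_k}\bigl(\gamma_{ij_1}\circ\gamma_{j_1j_2}\circ\cdots\circ\gamma_{j_{k-1}j_k}\bigr)(a_{j_k}),
\end{align*}
the maximum being over all walks $i=j_0,j_1,\ldots,j_k$ of length $k$. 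In particular, every such composition evaluated at $a_{j_k}$ is bounded by $\Gamma^{(k)}_i(\boldsymbol a)\le\Psi_i(\boldsymbol a)$ whenever $k\le n-1$.

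\emph{Chain construction.} Fix $i$ and suppose $z_i>\Psi_i(\boldsymbol a)\ge a_i$. Set $j_0:=i$. Given $j_k$ with $z_{j_k}>a_{j_k}$, the hypothesis $z_{j_k}\le\max\{a_{j_k},\Gamma_{j_k}(\boldsymbol z)\}$ yields $z_{j_k}\le\Gamma_{j_k}(\boldsymbol z)$. Because the maximum over $j$ is over a finite set, it is attained, so we may pick $j_{k+1}$ with
\begin{align*}
z_{j_k}\le\gamma_{j_kj_{k+1}}(z_{j_{k+1}}).
\end{align*}
Chaining these inequalities by monotonicity gives, for every $k$ reached by the construction,
\begin{align*}
z_i\le\bigl(\gamma_{j_0j_1}\circ\cdots\circ\gamma_{j_{k-1}j_k}\bigr)(z_{j_k}).
\end{align*}
We continue the construction until one of two things happens.
\begin{itemize}
\item[(a)] For some $k\le n-1$, $z_{j_k}\le a_{j_k}$. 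Then monotonicity and the walk formula give $z_i\le\Gamma^{(k)}_i(\boldsymbol a)\le\Psi_i(\boldsymbol a)$. For $k=0$ this reads $z_i\le a_i$. This contradicts the assumption on $z_i$.
\item[(b)] Case (a) never occurs for $k\le n-1$. Then $j_0,\ldots,j_n$ are all defined; these are $n+1$ indices drawn from $n$ nodes, so some node repeats. Let $q$ be the first index such that $j_q=j_p$ for some $p<q$. Then $j_p,\ldots,j_{q-1}$ are pairwise distinct, so they form a simple directed cycle of length $r=q-p\in\{1,\ldots,n\}$. The chained inequalities give $z_{j_p}\le C(z_{j_p})$, where $C$ is the cycle composition $\gamma_{j_pj_{p+1}}\circ\cdots\circ\gamma_{j_{q-1}j_p}$. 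If $r=1$, $C=\gamma_{j_pj_p}$ and the diagonal condition applies; if $r\ge2$, the cyclic condition applies. In either case $C(s)<s$ for every $s>0$, so $z_{j_p}=0$. Since every $\gamma_{jk}(0)=0$, this gives $z_i\le(\gamma_{j_0j_1}\circ\cdots\circ\gamma_{j_{p-1}j_p})(0)=0\le\Psi_i(\boldsymbol a)$, again a contradiction.
\end{itemize}
Hence $z_i\le\Psi_i(\boldsymbol a)$ for every $i$, which is the claim.

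\emph{Main obstacle.} The delicate step is (b). One must extract a \emph{simple} cycle, so that the hypotheses, which are stated only for pairwise distinct indices, actually apply; taking the first repetition achieves this. One must also use $\gamma_{ij}(0)=0$ to propagate $z_{j_p}=0$ back to node $i$. Continuity of $\boldsymbol\Gamma$ is not needed in this argument.
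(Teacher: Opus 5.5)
Your proof is correct. The paper does not prove this lemma itself: it imports it, together with Lemma~\ref{prop:Gamma_representation}, from \cite{karafyllis2011vector}, and the continuity hypothesis comes from that source. So the comparison is with a citation rather than with an in-paper argument.

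A typical operator-level proof unrolls the hypothesis to $\boldsymbol z\le\operatorname{MAX}\{\boldsymbol a,\boldsymbol\Gamma(\boldsymbol a),\ldots,\boldsymbol\Gamma^{(k-1)}(\boldsymbol a),\boldsymbol\Gamma^{(k)}(\boldsymbol z)\}$. It then bounds the high-order iterates of $\boldsymbol a$ by removing cycles from walks, and must finally dispose of the residual term $\boldsymbol\Gamma^{(k)}(\boldsymbol z)$. That last step is where limit arguments, and hence continuity, usually enter.

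Your chain-tracing argument avoids the residual altogether. At each step you choose an index attaining the maximum in $\Gamma_{j_k}(\boldsymbol z)$, which gives a walk along which $z_{j_k}\le\gamma_{j_kj_{k+1}}(z_{j_{k+1}})$ holds pointwise. The pigeonhole principle then either reaches an input term within $n-1$ steps, giving the $\boldsymbol\Gamma^{(k)}(\boldsymbol a)$ bound through the walk formula, or closes a simple cycle that contradicts the cyclic conditions. This uses only monotonicity and the finiteness of the index set, which confirms your remark that continuity is superfluous. The argument is self-contained and fits the linear, zero-diagonal operator $\boldsymbol\Gamma_\eta$ to which the paper applies the lemma.

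One small simplification is available in case (b). There you have $p\le q-1\le n-1$, so $z_{j_p}>a_{j_p}\ge 0$ is already known. Then $z_{j_p}\le C(z_{j_p})<z_{j_p}$ is an immediate contradiction. Propagating $z_{j_p}=0$ back to node $i$ via $\gamma_{jk}(0)=0$ is harmless but not needed.
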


\subsection{A Centralized Sufficient Criterion for Voltage Stability}

We first derive a centralized cyclic small-gain certificate for the voltage subsystem \eqref{eq:linearized_volt_dynamics}-\eqref{eq:Di_def}.
This serves as the baseline from which the cluster-based stability conditions are obtained.


\begin{prop}[A centralized (network-level) sufficient stability criterion]
\label{thm:centralized_small_gain_certificate} Let $\mathcal V=\{1,\ldots,n\}$ denote the set of grid-forming inverter nodes.
Consider the voltage subsystem
\eqref{eq:linearized_volt_dynamics}-\eqref{eq:Di_def}, obtained under
Assumptions \ref{ass_lossless}-\ref{ass:small_angle_decoupling}. Assume that, for every $i\in\mathcal V$,
$\tau_{\mathrm Q_i}$ and $k_{\mathrm Q_i}$ satisfy
\eqref{eq:kQ_bound_hierarchical}. Let the node-to-node gains
$\gamma_{ik}\ge 0$ be defined as in
Definition \ref{def:hierarchical_voltage_gains}. If every directed cycle satisfies
\begin{align}
\gamma_{j_1j_2}\gamma_{j_2j_3}\cdots\gamma_{j_rj_1}
<
1,
\label{eq:central_cycle_condition}
\end{align}
for all $r \in \{2,\ldots,n\}$ and all pairwise distinct indices $j_1,\ldots,j_r \in \mathcal V$, then the voltage subsystem
\eqref{eq:linearized_volt_dynamics}-\eqref{eq:Di_def} is exponentially stable, and the estimate \eqref{eq:eiss} holds for all $t\ge t_0$.
\end{prop}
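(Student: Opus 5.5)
Because the voltage subsystem is linear, it is enough to show that $\boldsymbol A_{\rm v}$ is Hurwitz. I would nonetheless follow the MAX-preserving framework of Lemma~\ref{lm: bound_with_initial}, since that keeps the argument aligned with the cyclic structure of the certificate. Write \eqref{eq:linearized_volt_dynamics} as $\delta\dot V_i=-\lambda_i\delta V_i+\sum_{k\in\mathcal N_i}a_{ik}\delta V_k$ with $a_{ik}=k_{\mathrm Q_i}V_i^{\rm s}|B_{ik}|/\tau_{\mathrm Q_i}\ge0$. By \eqref{eq:gammaik_def}, $a_{ik}=\lambda_i\gamma_{ik}\zeta_{ik}$, and $\sum_{k}\zeta_{ik}=1$. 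Variation of constants then gives
\[
|\delta V_i(t)|\le e^{-\lambda_i(t-t_0)}|\delta V_i(t_0)|+\max_{k\in\mathcal N_i}\gamma_{ik}\sup_{s\in[t_0,t]}|\delta V_k(s)|,
\]
because $\int_{t_0}^t\lambda_i e^{-\lambda_i(t-s)}\sum_k\zeta_{ik}\gamma_{ik}|\delta V_k(s)|\,ds\le\max_k\gamma_{ik}\sup|\delta V_k|$. To obtain exponential decay rather than mere boundedness, I would run the same estimate on the weighted signals $e^{\sigma(t-t_0)}|\delta V_i(t)|$ with a small $\sigma\in(0,\min_i\lambda_i)$. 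This replaces $\gamma_{ik}$ by $\gamma_{ik}^\sigma:=\gamma_{ik}\lambda_i/(\lambda_i-\sigma)$.

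With $z_i:=\sup_{s\in[t_0,t]}e^{\sigma(s-t_0)}|\delta V_i(s)|$, the previous step yields $z_i\le 2\max\{a_i,\Gamma_i(\boldsymbol z)\}$, where $a_i=|\delta V_i(t_0)|$ and $\Gamma_i(\boldsymbol z)=\max_k\gamma^\sigma_{ik}z_k$ (using $u+v\le2\max\{u,v\}$). The factor $2$ would destroy the small-gain margin, so it has to be removed. The cleaner route is a weak-triangle form: for any $\epsilon>0$,
\[
z_i\le\max\{(1+\epsilon^{-1})a_i,\,(1+\epsilon)\Gamma_i(\boldsymbol z)\}.
\]
The map $\boldsymbol z\mapsto(1+\epsilon)\boldsymbol\Gamma(\boldsymbol z)$ has linear gains $(1+\epsilon)\gamma^\sigma_{ik}$, so it is MAX-preserving by Lemma~\ref{prop:Gamma_representation}. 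The self-gains vanish because $\gamma_{ii}=0$. Each cycle composition equals $(1+\epsilon)^r\prod\gamma^\sigma$, and since \eqref{eq:central_cycle_condition} is strict and there are finitely many simple cycles, I can choose $\epsilon,\sigma$ small enough that every such product is still $<1$.

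Lemma~\ref{lm: bound_with_initial} then gives $\boldsymbol z\le\boldsymbol\Psi((1+\epsilon^{-1})\boldsymbol a)$. Since $\boldsymbol\Psi$ is built from finitely many linear MAX maps, this bound is at most $c\max_i a_i$ for a constant $c$ that does not depend on $t$. Hence $|\delta V_i(t)|\le c\,e^{-\sigma(t-t_0)}\|\delta\boldsymbol V(t_0)\|_\infty$, and norm equivalence gives \eqref{eq:eiss} with $\lambda_{\rm v}=\sigma$.

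The main obstacle is the bookkeeping that makes Lemma~\ref{lm: bound_with_initial} applicable:
\begin{itemize}
\item obtaining the max-form (not sum-form) inequality from the convolution integral without giving up the strict margin;
\item ensuring $z_i$ is finite on each finite horizon, which holds because a linear ODE has continuous solutions;
\item checking that $\epsilon$ and $\sigma$ can be chosen uniformly over all cycles.
\end{itemize}
If the $\epsilon$-splitting proves awkward, the fallback is to apply the Perron--Frobenius argument directly. The cycle condition implies $\rho(\boldsymbol\Lambda^{-1}\boldsymbol N)<1$, where $\boldsymbol N=(a_{ik})$ and $\boldsymbol\Lambda=\operatorname{diag}(\lambda_i)$. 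The key step is to bound every walk product by simple-cycle products times a bounded path factor, which makes powers of $\boldsymbol G=(\gamma_{ik}\zeta_{ik})\le(\gamma_{ik})$ decay geometrically. Since $-\boldsymbol\Lambda+\boldsymbol N$ is Metzler, $\rho(\boldsymbol\Lambda^{-1}\boldsymbol N)<1$ is equivalent to $\boldsymbol A_{\rm v}$ being Hurwitz.
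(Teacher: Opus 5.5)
Your main argument is correct and essentially matches the paper's proof. Both use a weighted supremum with rate below $\min_i\lambda_i$, the convex-combination bound $\sum_k\zeta_{ik}(\cdot)\le\max_k(\cdot)$ to pass from sum form to max form, the weak-triangle splitting (you put $(1+\epsilon)$ on the gain with $\epsilon$ small, the paper puts $(1+\varepsilon^{-1})$ on it with $\varepsilon$ large), Lemma~\ref{lm: bound_with_initial}, and homogeneity and monotonicity of $\boldsymbol\Psi$. One caveat concerns only your Perron--Frobenius fallback: the entrywise bound $\boldsymbol G\le(\gamma_{ik})$ cannot by itself give geometric decay, because the cycle condition does not force $\rho((\gamma_{ik}))<1$ (for example, three nodes with all off-diagonal gains $0.9$ give $\rho=1.8$), so the decay of $\boldsymbol G^m$ must come from $\sum_k\zeta_{ik}=1$, which makes the $\zeta$-weights of all walks leaving a node sum to at most one.
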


\noindent \textbf{Proof.}
For each $i\in\mathcal V$, the variation-of-constants formula applied to
\eqref{eq:linearized_volt_dynamics} gives, for all $t\ge t_0$,


\begin{align*}
\begin{split}
\delta V_i(t)
&=
e^{-\lambda_i(t-t_0)}\delta V_i(t_0)
\\
&\quad+
\frac{k_{\mathrm Q_i}V_i^{\rm s}}{\tau_{\mathrm Q_i}}
\int_{t_0}^t
e^{-\lambda_i(t-\tau)}
\left(
\sum_{k\in\mathcal N_i}
|B_{ik}|\delta V_k(\tau)
\right)d\tau .
\end{split}
\end{align*}


\noindent Since $\lambda_i>0$ for all $i\in\mathcal V$ by
\eqref{eq:kQ_bound_hierarchical}, we obtain, for every $t\ge t_0$ and every
$i\in\mathcal V$,


\begin{align}
\begin{split}
&|\delta V_i(t)|
\le
e^{-\lambda_i(t-t_0)}|\delta V_i(t_0)|
\\
&\quad+
\frac{k_{\mathrm Q_i}V_i^{\rm s}}{\tau_{\mathrm Q_i}}
\sum_{k\in\mathcal N_i}
\int_{t_0}^t
e^{-\lambda_i(t-\tau)}
|B_{ik}||\delta V_k(\tau)|d\tau .
\end{split}
\label{eq:weighted_bound_i}
\end{align}


\noindent Let
\begin{align*}
\lambda:=\min_{i\in\mathcal V}\lambda_i>0.
\end{align*}
Fix a number $\eta\in(0,\lambda)$, to be chosen below. For each
$i\in\mathcal V$ and each $t\ge t_0$, define the weighted supremum
\begin{align}
W_i[t_0,t]
:=
\sup_{\tau\in[t_0,t]}
e^{\eta(\tau-t_0)}|\delta V_i(\tau)|,
\label{eq:weighted_sup_voltage}
\end{align}
and define


\begin{align*}
\boldsymbol W[t_0,t]
:=
\bigl(
W_1[t_0,t],\ldots,W_n[t_0,t]
\bigr)^\top .
\end{align*}


\noindent Multiplying \eqref{eq:weighted_bound_i} by $e^{\eta(t-t_0)}$ yields
\begin{align*}
\begin{split}
&e^{\eta(t-t_0)}|\delta V_i(t)|
\le
e^{-(\lambda_i-\eta)(t-t_0)}|\delta V_i(t_0)|
\\
&+
\frac{k_{\mathrm Q_i}V_i^{\rm s}}{\tau_{\mathrm Q_i}}
\sum_{k\in\mathcal N_i}
|B_{ik}|
\int_{t_0}^t
e^{-(\lambda_i-\eta)(t-\tau)}
e^{\eta(\tau-t_0)}
|\delta V_k(\tau)|d\tau .
\end{split}
\end{align*}
Since $\eta<\lambda\le\lambda_i$, one has $\lambda_i-\eta>0$. Hence, using
\eqref{eq:weighted_sup_voltage},
\begin{align}
\begin{split}
&e^{\eta(t-t_0)}|\delta V_i(t)|
\le
|\delta V_i(t_0)|
\\&\qquad\qquad\qquad+
\frac{k_{\mathrm Q_i}V_i^{\rm s}}{\tau_{\mathrm Q_i}(\lambda_i-\eta)}
\sum_{k\in\mathcal N_i}
|B_{ik}|W_k[t_0,t].
\end{split}
\label{eq:weighted_bound_eta_2}
\end{align}
Now consider some constants $\zeta_{ik}>0, k\in\mathcal{N}_i$ such that $\sum_{k\in\mathcal N_i}\zeta_{ik}=1$. Then,
\begin{align}
\begin{split}
\sum_{k\in\mathcal N_i}|B_{ik}|W_k[t_0,t]
&=\sum_{k\in\mathcal N_i}\zeta_{ik}\frac{|B_{ik}|W_k[t_0,t]}{\zeta_{ik}}\\\leq& \bigg(\sum_{k\in\mathcal N_i}\zeta_{ik}\bigg) \max_{k\in\mathcal N_i}
\frac{|B_{ik}|W_k[t_0,t]}{\zeta_{ik}}\\=&\max_{k\in\mathcal N_i}
\frac{|B_{ik}|W_k[t_0,t]}{\zeta_{ik}}.
\end{split}
\label{eq:sum_to_degree_max_eta}
\end{align}
Recalling $B_{ik}=0$ for all $k\notin\mathcal{N}_i$ and using \eqref{eq:sum_to_degree_max_eta}, we obtain from \eqref{eq:weighted_bound_eta_2} that
\begin{align}
\begin{split}
e^{\eta(t-t_0)}|\delta V_i(t)|
&\!\le\!
|\delta V_i(t_0)|
+
\max_{k\in\mathcal V\setminus\{i\}}
\hat\gamma_{ik,\eta}W_k[t_0,t],
\end{split}
\label{eq:weighted_bound_eta_3}
\end{align}
where
\begin{align*}
\hat\gamma_{ik,\eta}
:=
\begin{cases}
\dfrac{k_{\mathrm Q_i}V_i^{\rm s}|B_{ik}|}
{\tau_{\mathrm Q_i}(\lambda_i-\eta)\zeta_{ik}},
& k\in\mathcal N_i,\\[3mm]
0,
& k\notin\mathcal N_i,
\end{cases}
\end{align*}
or equivalently,
\begin{align*}
\hat\gamma_{ik,\eta}
=
\frac{\lambda_i}{\lambda_i-\eta}\gamma_{ik},
\qquad
i,k\in\mathcal V,
\end{align*}
with $\gamma_{ik}$ defined in \eqref{eq:gammaik_def}.

Applying the estimate
\begin{align*}
a+b
\le
\max\{(1+\varepsilon)a,(1+\varepsilon^{-1})b\},
\qquad
a,b\ge0,\quad \varepsilon>0,
\end{align*}
to \eqref{eq:weighted_bound_eta_3} gives
\begin{align}
\begin{split}
&e^{\eta(t-t_0)}|\delta V_i(t)|
\le
\max
\Biggl\{
(1+\varepsilon)|\delta V_i(t_0)|,
\\
&\qquad\qquad\qquad
\max_{k\in\mathcal V\setminus\{i\}}
(1+\varepsilon^{-1})\hat\gamma_{ik,\eta}W_k[t_0,t]
\Biggr\}.
\end{split}
\label{eq:weighted_max_gain_eta}
\end{align}
Define the inflated gains
\begin{align}
\tilde\gamma_{ik,\eta}
&:=
(1+\varepsilon^{-1})\hat\gamma_{ik,\eta},
\qquad i\neq k,
\label{eq:gamma_tilde_eta_offdiag}
\\
\tilde\gamma_{ii,\eta}
&:=
0,
\qquad i\in\mathcal V,
\label{eq:gamma_tilde_eta_diag}
\end{align}
and define the gain operator
$\boldsymbol\Gamma_\eta:\mathbb R_+^n\to\mathbb R_+^n$ by
\begin{align}
(\Gamma_\eta)_i(\boldsymbol z)
:=
\max_{k\in\mathcal V}\tilde\gamma_{ik,\eta}z_k,
\qquad
i\in\mathcal V.
\label{eq:Gamma_eta_def}
\end{align}
Taking the supremum of \eqref{eq:weighted_max_gain_eta} over $[t_0,t]$ yields
\begin{align*}
W_i[t_0,t]
\le
\max
\left\{
(1+\varepsilon)|\delta V_i(t_0)|,
(\Gamma_\eta)_i\bigl(\boldsymbol W[t_0,t]\bigr)
\right\},
\end{align*}
for all $t\ge t_0$ and $i\in\mathcal V$. Equivalently,
\begin{align}
\boldsymbol W[t_0,t]
\le
\operatorname{MAX}
\left\{
\boldsymbol a(t_0),
\boldsymbol\Gamma_\eta\bigl(\boldsymbol W[t_0,t]\bigr)
\right\},
\label{eq:weighted_sup_gain_eta_vector}
\end{align}
where
\begin{align}
\boldsymbol a(t_0)
:=
\bigl(
a_1(t_0),\ldots,a_n(t_0)
\bigr)^\top,
\quad
a_i(t_0):=(1+\varepsilon)|\delta V_i(t_0)|.
\label{eq:a_t0_eta_def}
\end{align}

We now choose $\eta$ and $\varepsilon$ so that the cyclic small-gain conditions
hold for $\boldsymbol\Gamma_\eta$. By the hypothesis
\eqref{eq:central_cycle_condition}, every directed cycle satisfies
\begin{align*}
\gamma_{j_1j_2}\gamma_{j_2j_3}\cdots\gamma_{j_rj_1}<1,
\end{align*}
for all $r \in \{2,\ldots,n\}$ and all pairwise distinct indices $j_1,\ldots,j_r \in \mathcal V$. For each such fixed directed cycle, the map
\begin{align*}
\eta
\mapsto
\hat\gamma_{j_1j_2,\eta}\,
\hat\gamma_{j_2j_3,\eta}\,
\cdots\,
\hat\gamma_{j_rj_1,\eta}
\end{align*}
is continuous on $[0,\lambda)$ and satisfies
\begin{align*}
\lim_{\eta\to0^+}
\hat\gamma_{j_1j_2,\eta}\,
\hat\gamma_{j_2j_3,\eta}\,
\cdots\,
\hat\gamma_{j_rj_1,\eta}\,
=
\gamma_{j_1j_2}
\gamma_{j_2j_3}
\cdots
\gamma_{j_rj_1}.
\end{align*}
Since the cycle inequalities in \eqref{eq:central_cycle_condition} are strict
and since there are only finitely many directed cycles with pairwise distinct
vertices, there exists $\eta\in(0,\lambda)$ sufficiently small such that
\begin{align*}
\sigma_\eta
:=
\max_{2\le r\le n}
\max_{\substack{j_1,\ldots,j_r\in\mathcal V\\
\text{pairwise distinct}}}
\hat\gamma_{j_1j_2,\eta}\,
\hat\gamma_{j_2j_3,\eta}\,
\cdots\,
\hat\gamma_{j_rj_1,\eta}
<1.
\end{align*}

For any such cycle, by \eqref{eq:gamma_tilde_eta_offdiag},
\begin{align*}
\begin{split}
&\tilde\gamma_{j_1j_2,\eta}\,
\tilde\gamma_{j_2j_3,\eta}\,
\cdots\,
\tilde\gamma_{j_rj_1,\eta}
\\
&\qquad =
(1+\varepsilon^{-1})^r\,
\hat\gamma_{j_1j_2,\eta}\,
\hat\gamma_{j_2j_3,\eta}\,
\cdots\,
\hat\gamma_{j_rj_1,\eta}.
\end{split}
\end{align*}
Thus
\begin{align}
\tilde\gamma_{j_1j_2,\eta}
\tilde\gamma_{j_2j_3,\eta}
\cdots
\tilde\gamma_{j_rj_1,\eta}
\le
(1+\varepsilon^{-1})^r\sigma_\eta.
\label{eq:tilde_cycle_bound_eta}
\end{align}
If $\sigma_\eta=0$, any $\varepsilon>0$ is admissible. If $\sigma_\eta>0$,
choose $\varepsilon>0$ sufficiently large so that
\begin{align}
(1+\varepsilon^{-1})^n\sigma_\eta<1.
\label{eq:epsilon_eta_choice}
\end{align}
For example, it is sufficient to choose $\varepsilon
>
\frac{1}
{\left(\frac{1}{\sigma_\eta}\right)^{\frac{1}{n}}-1}.$ Since $r\in\{2,\ldots,n\}$, \eqref{eq:tilde_cycle_bound_eta} and
\eqref{eq:epsilon_eta_choice} imply
\begin{align}
\tilde\gamma_{j_1j_2,\eta}\,
\tilde\gamma_{j_2j_3,\eta}\,
\cdots\,
\tilde\gamma_{j_rj_1,\eta}
<1,
\label{eq:tilde_cycle_eta_condition}
\end{align}
for all $r\in\{2,\ldots,n\}$ and all pairwise distinct
$j_1,\ldots,j_r\in\mathcal V$. Moreover, by
\eqref{eq:gamma_tilde_eta_diag},
\begin{align}
\tilde\gamma_{ii,\eta}=0<1,
\qquad
i\in\mathcal V.
\label{eq:self_gain_eta_condition}
\end{align}

Because $\boldsymbol\Gamma_\eta$ has the form \eqref{eq:Gamma_eta_def} with
nonnegative scalar gains, it is MAX-preserving (see
Lemma \ref{prop:Gamma_representation}), and satisfies
$\boldsymbol\Gamma_\eta(\boldsymbol 0_n)=\boldsymbol 0_n$. Moreover, \eqref{eq:tilde_cycle_eta_condition}
and \eqref{eq:self_gain_eta_condition} are precisely the cyclic small-gain
conditions for $\boldsymbol\Gamma_\eta$. Therefore, by
Lemma \ref{lm: bound_with_initial}, applied to
\eqref{eq:weighted_sup_gain_eta_vector}, we obtain
\begin{align}
\boldsymbol W[t_0,t]
\le
\boldsymbol\Psi_\eta\bigl(\boldsymbol a(t_0)\bigr),
\qquad
t\ge t_0,
\label{eq:W_Psi_eta_bound}
\end{align}
where
\begin{align}
\boldsymbol\Psi_\eta(\boldsymbol z)
:=
\operatorname{MAX}
\left\{
\boldsymbol z,
\boldsymbol\Gamma_\eta(\boldsymbol z),
\boldsymbol\Gamma_\eta^{(2)}(\boldsymbol z),
\ldots,
\boldsymbol\Gamma_\eta^{(n-1)}(\boldsymbol z)
\right\}.
\label{eq:Psi_eta_def}
\end{align}

Since
\begin{align*}
e^{\eta(t-t_0)}|\delta V_i(t)|
\le
W_i[t_0,t],
\end{align*}
it follows from \eqref{eq:W_Psi_eta_bound} that, for all $t\ge t_0$ and
$i\in\mathcal V$,
\begin{align}
|\delta V_i(t)|
\le
e^{-\eta(t-t_0)}
\left(
\boldsymbol\Psi_\eta\bigl(\boldsymbol a(t_0)\bigr)
\right)_i.
\label{eq:component_decay_eta_1}
\end{align}
Define the nonnegative vector
\begin{align*}
\boldsymbol w(t_0)
:=
\bigl(
|\delta V_1(t_0)|,\ldots,|\delta V_n(t_0)|
\bigr)^\top.
\end{align*}
Then, by \eqref{eq:a_t0_eta_def},
\begin{align*}
\boldsymbol a(t_0)
=
(1+\varepsilon)\boldsymbol w(t_0).
\end{align*}
Thus \eqref{eq:component_decay_eta_1} gives
\begin{align}
|\delta V_i(t)|
\le
e^{-\eta(t-t_0)}
\left(
\boldsymbol\Psi_\eta\bigl((1+\varepsilon)\boldsymbol w(t_0)\bigr)
\right)_i.
\label{eq:component_decay_eta_2}
\end{align}

We next use monotonicity and positive homogeneity of $\boldsymbol\Psi_\eta$.
First, since
\begin{align*}
\boldsymbol w(t_0)
\le
\|\boldsymbol w(t_0)\|_2\boldsymbol 1_n,
\end{align*}
and since $\boldsymbol\Psi_\eta$ is monotone, \eqref{eq:component_decay_eta_2}
implies
\begin{align}
|\delta V_i(t)|
\le
e^{-\eta(t-t_0)}
\left(
\boldsymbol\Psi_\eta\bigl(
(1+\varepsilon)\|\boldsymbol w(t_0)\|_2\boldsymbol 1_n
\bigr)
\right)_i.
\label{eq:component_decay_eta_3}
\end{align}
Second, $\boldsymbol\Gamma_\eta$ is positively homogeneous. Indeed, for every
$c\ge0$, every $\boldsymbol z\in\mathbb R_+^n$, and every $i\in\mathcal V$,
\begin{align*}
(\Gamma_\eta)_i(c\boldsymbol z)
&=
\max_{k\in\mathcal V}\tilde\gamma_{ik,\eta}(cz_k) =
c\max_{k\in\mathcal V}\tilde\gamma_{ik,\eta}z_k
=
c(\Gamma_\eta)_i(\boldsymbol z).
\end{align*}
Hence
\begin{align*}
\boldsymbol\Gamma_\eta(c\boldsymbol z)
=
c\boldsymbol\Gamma_\eta(\boldsymbol z),
\qquad
\forall c\ge0,\ \boldsymbol z\in\mathbb R_+^n.
\end{align*}
By induction,
\begin{align*}
\boldsymbol\Gamma_\eta^{(m)}(c\boldsymbol z)
=
c\boldsymbol\Gamma_\eta^{(m)}(\boldsymbol z),
\qquad
\forall c\ge0,\ \boldsymbol z\in\mathbb R_+^n,\ m\in\mathbb N.
\end{align*}
Therefore, using \eqref{eq:Psi_eta_def},
\begin{align}
&\boldsymbol\Psi_\eta(c\boldsymbol z)
=
\operatorname{MAX}
\left\{
c\boldsymbol z,
c\boldsymbol\Gamma_\eta(\boldsymbol z),
\ldots,
c\boldsymbol\Gamma_\eta^{(n-1)}(\boldsymbol z)
\right\}
\nonumber\\
&=
c
\operatorname{MAX}
\left\{
\boldsymbol z,
\boldsymbol\Gamma_\eta(\boldsymbol z),
\ldots,
\boldsymbol\Gamma_\eta^{(n-1)}(\boldsymbol z)
\right\} =
c\boldsymbol\Psi_\eta(\boldsymbol z).
\label{eq:Psi_eta_homogeneous}
\end{align}
Applying \eqref{eq:Psi_eta_homogeneous} to
\eqref{eq:component_decay_eta_3} gives
\begin{align*}
|\delta V_i(t)|
\le
(1+\varepsilon)e^{-\eta(t-t_0)}
\left(
\boldsymbol\Psi_\eta(\boldsymbol 1_n)
\right)_i
\|\boldsymbol w(t_0)\|_2.
\end{align*}
Squaring, summing over $i\in\mathcal V$, and taking square roots
\begin{align*}
\|\delta\boldsymbol V(t)\|_2
\le
(1+\varepsilon)
\|\boldsymbol\Psi_\eta(\boldsymbol 1_n)\|_2
e^{-\eta(t-t_0)}
\|\boldsymbol w(t_0)\|_2.
\end{align*}
Finally, since $\|\boldsymbol w(t_0)\|_2
=
\|\delta\boldsymbol V(t_0)\|_2$, we conclude that
\begin{align*}
\|\delta\boldsymbol V(t)\|_2
\le
M_{\rm v}e^{-\lambda_{\rm v}(t-t_0)}
\|\delta\boldsymbol V(t_0)\|_2,
\qquad
t\ge t_0,
\end{align*}
where
\begin{align*}
M_{\rm v}
:=
(1+\varepsilon)
\|\boldsymbol\Psi_\eta(\boldsymbol 1_n)\|_2
>0,
\qquad
\lambda_{\rm v}:=\eta>0.
\end{align*}
This completes the proof of Proposition
\ref{thm:centralized_small_gain_certificate}.
\hfill $\square$

\subsection{Derivation of the Cluster-Based Voltage Stability Conditions (Theorem \ref{thm:hierarchical_voltage_stability})}
It is enough to verify the cyclic small-gain condition
\eqref{eq:central_cycle_condition} in
Proposition \ref{thm:centralized_small_gain_certificate}. Once this is shown, the
estimate \eqref{eq:eiss} follows directly from Proposition \ref{thm:centralized_small_gain_certificate}.

Let
\begin{align}
(j_1,j_2,\ldots,j_r,j_1),
\label{eq:cluster_proof_node_cycle}
\end{align}
be an arbitrary directed cycle of pairwise distinct nodes, where
$r\in\{2,\ldots,n\}$. We use the cyclic convention $j_{r+1}:=j_1.$ We must prove that
\begin{align}
\prod_{\nu=1}^{r}\gamma_{j_\nu j_{\nu+1}}<1.
\label{eq:cluster_proof_goal}
\end{align}

For each $\nu\in\{1,\ldots,r\}$, let $\alpha_\nu$ be the unique cluster index such
that $
j_\nu\in\mathcal A_{\alpha_\nu}.$ There are two cases.

\medskip
\noindent\textit{Case 1: the node cycle is contained entirely in one cluster.}

Suppose that there exists $\alpha\in\{1,\ldots,m\}$ such that $j_1,\ldots,j_r\in\mathcal A_\alpha.$ If condition (i) holds for $\mathcal A_\alpha$, then
\eqref{eq:cluster_proof_goal} follows immediately.

\medskip
\noindent\textit{Case 2: the cycle visits at least two distinct clusters.}

Since the cycle visits at least two distinct clusters, the cluster sequence
$\alpha_1,\ldots,\alpha_r$ is not constant. By cyclically relabeling the node cycle, choose $j_1$ to be a node immediately following a transition between two distinct clusters. 

We now partition the node cycle into maximal consecutive segments that remain in
a single cluster and leave that cluster only at the last edge of the segment. More
precisely, there exist integers $
1=p_1<p_2<\cdots<p_q\le r,\; q\ge 2$ such that, if we set $p_{q+1}:=r+1,$
then for each $\ell\in\{1,\ldots,q\}$ the nodes $j_{p_\ell},\,j_{p_\ell+1},\,\ldots,\,j_{p_{\ell+1}-1}$
belong to the same cluster, while the next node $j_{p_{\ell+1}}$ belongs to a
different cluster, where $j_{r+1}:=j_1$. Let $\beta_\ell$ denote the cluster index of
the $\ell$th segment, that is, $
j_{p_\ell},\,j_{p_\ell+1},\,\ldots,\,j_{p_{\ell+1}-1}
\in
\mathcal A_{\beta_\ell},$
and $j_{p_{\ell+1}}\in\mathcal A_{\beta_{\ell+1}},
\;
\beta_{\ell+1}\neq\beta_\ell$
for $\ell=1,\ldots,q$, where we adopt the cyclic convention $
\beta_{q+1}:=\beta_1.$ Thus the $\ell$th segment is the directed path
\begin{align}
\bigl(
j_{p_\ell},
j_{p_\ell+1},
\ldots,
j_{p_{\ell+1}}
\bigr),
\label{eq:cluster_proof_lth_segment_path}
\end{align}
which starts in cluster $\mathcal A_{\beta_\ell}$, stays inside
$\mathcal A_{\beta_\ell}$ up to the node $j_{p_{\ell+1}-1}$, and enters the
next cluster $\mathcal A_{\beta_{\ell+1}}$ at the final step.

For each $\ell\in\{1,\ldots,q\}$, define the gain of the $\ell$th segment by
\begin{align*}
\Pi_\ell
:=
\prod_{\nu=p_\ell}^{p_{\ell+1}-1}
\gamma_{j_\nu j_{\nu+1}},
\end{align*}
where $j_{r+1}:=j_1$. Since the original cycle
\eqref{eq:cluster_proof_node_cycle} is simple, the nodes in each segment are
pairwise distinct. Therefore the path
\eqref{eq:cluster_proof_lth_segment_path} is admissible in the definition
\eqref{eq:cluster_gain_def} of the cluster-to-cluster gain
$\bar\gamma_{\beta_\ell\beta_{\ell+1}}$, and hence $\Pi_\ell
\le
\bar\gamma_{\beta_\ell\beta_{\ell+1}},
\,
\ell=1,\ldots,q.$
Multiplying these inequalities yields
\begin{align}
\gamma_{j_1j_2}\gamma_{j_2j_3}\cdots\gamma_{j_rj_1}
=
\prod_{\ell=1}^{q}\Pi_\ell
\le
\prod_{\ell=1}^{q}
\bar\gamma_{\beta_\ell\beta_{\ell+1}}.
\label{eq:cluster_proof_node_product_cluster_product}
\end{align}
If $\bar\gamma_{\beta_\ell\beta_{\ell+1}}=0$ for some $\ell\in\{1,\ldots,q\}$, then
\eqref{eq:cluster_proof_node_product_cluster_product} immediately gives $\prod_{\ell=1}^{q}
\bar\gamma_{\beta_\ell\beta_{\ell+1}}
=0,$
and hence \eqref{eq:cluster_proof_goal} holds. Otherwise, $\bar\gamma_{\beta_\ell\beta_{\ell+1}}>0,\,
\ell=1,\ldots,q.$ By the definition of the effective inter-cluster neighbor set \eqref{eq:cluster_neighbor_set_def}, this implies
$\beta_{\ell+1}
\in
\mathcal N_{\beta_\ell}^{\rm cluster},
\,
\ell=1,\ldots,q.$ Thus, condition \eqref{eq:inter_cluster_gain_condition} yields $
\bar\gamma_{\beta_\ell\beta_{\ell+1}}<1,\,
\ell=1,\ldots,q.$ Since there are finitely many nonnegative factors and each is strictly smaller than
one, we have
$\prod_{\ell=1}^{q}
\bar\gamma_{\beta_\ell\beta_{\ell+1}}
<1.$ Combining this inequality with
\eqref{eq:cluster_proof_node_product_cluster_product} gives $\gamma_{j_1j_2}\gamma_{j_2j_3}\cdots\gamma_{j_rj_1}
<1.$ Thus, \eqref{eq:cluster_proof_goal}  holds.

Since the directed cycle \eqref{eq:cluster_proof_node_cycle} was arbitrary, every
directed cycle of pairwise distinct nodes in $\mathcal V$ satisfies
\eqref{eq:central_cycle_condition}. Hence, the conclusion follows directly from
Proposition \ref{thm:centralized_small_gain_certificate}. \hfill $\square$

\bibliographystyle{IEEEtran}
\bibliography{main}
\end{document}